\newcommand{\savehyperref}[2]{\texorpdfstring{\hyperref[#1]{#2}}{#2}}
\DeclareMathOperator {\Var}  {Var}
\newcommand {\set}   [1] {\left\{ #1 \right\}}
\newcommand {\brc}   [1] {\left(#1\right)}
\newcommand {\Exp}       {\mathbb{E}}
\newcommand {\Prob}  [1] {\Pr \brc{#1 }}
\newcommand {\E}     [1] {\Exp\left[#1\right]}
\newcommand {\given}{\mid}
\DeclareMathOperator*{\poly}{poly}
\newcommand{\iprod}[2]{\langle #1, #2\rangle}
\newcommand{\Nin}{N^{\sf in}}
\newcommand{\Nout}{N^{\sf out}}
\newcommand{\Abs}[1]{\left|#1\right|}
\newcommand{\agdeg}{\eta^H_{\text{max}}}
\DeclareMathOperator {\sdpcost}{sdp-cost}
\DeclareMathOperator {\diam}{diam}
\newtheorem{theorem}{Theorem}[section]
\newtheorem{lemma}[theorem]{Lemma}
\newtheorem{definition}[theorem]{Definition}
\newtheorem{problem}[theorem]{Problem}
\newtheorem*{question*}{Question}
\newtheorem*{definition*}{Definition}
\newtheorem{claim}{Claim}[section]
\newtheorem{remark}{Remark}[section]
\newcommand{\Ecut}[1]{E_{cut}(#1)}
\title{Approximation Algorithms for Hypergraph Small Set Expansion \\ and Small Set Vertex Expansion}
\author{Anand Louis\thanks{Supported by Santosh Vempala's NSF award CCF-1217793.}
\\Georgia Tech \\ anandl@gatech.edu \and Yury Makarychev\thanks{Supported by NSF CAREER award CCF-1150062 and NSF award IIS-1302662.}\\TTIC\\ yury@ttic.edu}
\date{}
\begin{document}
\maketitle
\begin{abstract}
The expansion of a hypergraph, a natural extension of the notion of expansion in graphs, is defined
as the minimum over all cuts in the hypergraph of the ratio of the number of the hyperedges cut to the 
size of the smaller side of the cut.
We study the Hypergraph Small Set Expansion problem, which, for a parameter $\delta \in (0,1/2]$, asks to
compute the cut having the least expansion while having at most $\delta$ fraction of the vertices 
on the smaller side of the cut. We present two algorithms.
Our first algorithm gives an $\tilde O(\delta^{-1} \sqrt{\log n})$ approximation. The second algorithm finds a set with expansion
$\tilde O(\delta^{-1}(\sqrt{d_{\text{max}}r^{-1}\log r\, \phi^*} + \phi^*))$ in
a $r$--uniform hypergraph with maximum degree $d_{\text{max}}$ (where
$\phi^*$ is the expansion of the optimal solution).
Using these results, we also obtain algorithms for the Small Set Vertex Expansion problem:
we get an
$\tilde O(\delta^{-1} \sqrt{\log n})$ approximation algorithm and an algorithm that finds a set with vertex expansion
$O\left(\delta^{-1}\sqrt{\phi^V \log d_{\text{max}} } + \delta^{-1} \phi^V\right)$
(where $\phi^V$ is the vertex expansion of the optimal solution).
 
For $\delta=1/2$, Hypergraph Small Set Expansion is equivalent to the hypergraph expansion problem. In this case, our approximation factor of $O(\sqrt{\log n})$ for expansion in hypergraphs matches the corresponding approximation
factor for expansion in graphs due to \cite{ARV}.
\end{abstract}

\section{Introduction} 
\label{sec:intro}
The expansion of a hypergraph, a natural extension of the notion of expansion in graphs, is defined
as follows.
\begin{definition}[Hypergraph Expansion]
Given  a hypergraph $H = (V, E)$ on $n$ vertices (each edge $e\in E$ of $H$ is a subset of vertices),
we say that an edge $e \in E$ is cut by a set $S$ if $e\cap S\neq \varnothing$ and $e\cap \bar{S} \neq \varnothing$
(i.e. some vertices in $e$ lie in $S$ and some vertices lie outside of $S$). We denote the set of edges cut by $S$ by 
$\Ecut{S}$.
The expansion $\phi (S)$ of a set $S\subset V$ ($S\neq \varnothing$, $S\neq V$) in a hypergraph $H = (V, E)$ is
defined as
\[ \phi(S) = \frac{|\Ecut{S}|}{\min (|S|,|\bar S|)}. \]
\end{definition}

Hypergraph expansion  and related hypergraph partitioning problems are of 
immense practical importance, having applications in parallel and distributed computing (\citet{ca99}), VLSI circuit 
design and computer architecture (\citet{kaks99,gglp00}), scientific computing (\cite{dbh06}) and other areas.
Inspite of this, there has no't been much theoretical work on them.
In this paper, we study a generalization of the Hypergraph Expansion prolbem, namely the Hypergraph Small Set Expansion problem.
\begin{problem}[Hypergraph Small Set Expansion Problem]
Given a hypergraph $H = (V,E)$ and a parameter $\delta \in (0,1/2]$, 
the Hypergraph Small Set Expansion problem (H-SSE) is to find a set $S \subset V$ of size at most $\delta n$ that minimizes $\phi(S)$.
The value of the optimal solution to H-SSE is called the small set expansion of $H$.
That is, for $\delta \in (0,1/2]$, the small set expansion $\phi^*_{H,\delta}$  of a hypergraph $H = (V,E)$ is defined as
\[ \phi^*_{H,\delta} = \min_{\substack{S\subset V\\ 0 < |S| \leq \delta n}} \phi(S). \]
\end{problem}
\noindent Note that for $\delta = 1/2$, the Hypergraph Small Set Expansion Problem is the 
Hypergraph Expansion Problem.


Small Set Expansion in graphs  has attracted a lot of attention recently. 
The problem was introduced by \citet{RS}, who showed that it is closely related to the
Unique Games problem. 
\cite{RST} designed an algorithm for SSE that finds a set of size $O(\delta n)$ with expansion  $O(\sqrt{\phi^* d \log (1/\delta)})$ in $d$ regular graphs
(where $\phi^*$ is the expansion of the optimal solution).
Later Bansal, Feige, Krauthgamer, Makarychev, Nagarajan, Naor, and Schwartz (2011) 
gave a $O(\sqrt{\log n \log (1/\delta)})$ approximation algorithm for the problem.

We present analogs of the results of~\cite{BFK}
and \cite{RST} for hypergraphs.
Our first result is an $\tilde O(\delta^{-1}\sqrt{\log n})$ 
approximation algorithm\footnote{The $\tilde O$--notation hides a $\log \delta^{-1}\log\log \delta^{-1}$ term.}
for H-SSE (see  \prettyref{thm:hsse}).
Our second result is an algorithm that finds a set with expansion
at most 
$\tilde{O}\left(\delta^{-1} \left(\sqrt{ 
d_{\text{max}} \frac{\log r}{r}
\phi^*_{H,\delta}} + \phi^*_{H,\delta} \right)\right)
$ if $H$ is an $r$--uniform hypergraph with maximum degree
$d_{\text{max}}$ (see \prettyref{thm:hsse-2}; the result also
applies to non-uniform hypergraphs, see \prettyref{thm:hsse-3}).

We note that H-SSE can be reduced to SSE (small set expansion in graphs) if all hyperedges have bounded size.
Let $r$ be the size of the largest hyperedge in $H$. Construct an auxiliary graph $F$ on $V$ as follows: 
pick a vertex in each hyperedge $e$ and connect it in $F$ to all other vertices of $e$ (i.e. replace $e$ with a star).
Then solve SSE in the graph $F$. It is easy to see that if we solve SSE using an $\alpha$ approximation algorithm,
then we get $(r-1) \alpha$ approximation for H-SSE. This approach gives $O(\sqrt{\log n \log (1/\delta)})$ approximation
if $r$ is bounded. However, if $H$ is an arbitrary hypergraph, we only get an  
$O(n\sqrt{\log n \log (1/\delta)})$ approximation. The goal of this paper is to give an approximation guarantee valid
for hypergraphs with hyperedges of arbitrary size.
We now formally state our main results.
\begin{theorem}
\label{thm:hsse}
There is a randomized polynomial-time approximation algorithm for the Hypergraph Small Set Expansion problem that
given a hypergraph $H = (V,E)$, and parameters $\varepsilon \in (0,1)$ and $\delta \in (0,1/2)$,
finds a set $S \subset V$ of size at most $(1+\varepsilon)\delta n$ such that 
\[
\phi(S) \leq O_{\varepsilon}\left(\delta^{-1}\log \delta^{-1}\log\log \delta^{-1}\cdot \sqrt{\log n}  \cdot  \phi^*_{H,\delta}\right)
=\tilde O_{\varepsilon}\left(\delta^{-1} \sqrt{\log n}  \, \phi^*_{H,\delta}\right),
\]
(where the constant in the $O$ notation depends polynomially on $1/\varepsilon$).
That is, the algorithm gives $O(\sqrt{\log n})$ approximation when $\delta$ and $\varepsilon$ are fixed.
\end{theorem}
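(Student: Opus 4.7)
The plan is to follow the Bansal--Feige--Krauthgamer--Makarychev--Nagarajan--Naor--Schwartz (BFK) strategy for graph SSE, lifted to hypergraphs through an SDP with hyperedge ``width'' variables and an $\ell_2^2$-orthogonal separator rounding. First I would set up an SDP with one vector $\bar{u}$ per vertex and a dummy unit vector $\bar{v}_0$, enforce $\iprod{\bar{u}}{\bar{v}_0} = \|\bar{u}\|^2 \in [0,1]$ (so $\|\bar{u}\|^2$ relaxes $\indicator[u \in S]$), impose $\sum_u \|\bar{u}\|^2 \leq \delta n$, and for each hyperedge $e$ introduce a scalar $z_e$ with $z_e \geq \|\bar{u} - \bar{w}\|^2$ for every pair $u, w \in e$. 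Include all $\ell_2^2$-triangle inequalities on the extended family $\{\bar{u}\} \cup \{\bar{v}_0 - \bar{u}\} \cup \{\bar{v}_0\}$, and minimize $\sum_{e \in E} z_e$. In the integral case $z_e$ indicates $e \in \Ecut{S}$, so this is a valid relaxation with optimum at most $\phi^*_{H,\delta} \cdot \delta n$.

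For the rounding I would invoke the $\ell_2^2$-orthogonal separator machinery of BFK and Chlamtac--Makarychev--Makarychev applied to the vectors $\{\bar{u}\}_{u \in V}$: this produces a random set $T \subseteq V$ with $\Prob{u \in T} = \alpha \|\bar{u}\|^2$ for a controllable parameter $\alpha$, and with pairwise separation probability $\Prob{u \in T, v \notin T} \leq O(\sqrt{\log n}) \cdot \|\bar{u} - \bar{v}\|^2 \cdot \Prob{u \in T}$. Standard amplification -- sample $T$ polynomially many times, keep the best candidate, and trim to size $(1+\varepsilon)\delta n$ via a size-capping argument -- would then extract the final set.

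The main obstacle is bounding the probability that a hyperedge $e$ is cut by $T$. A union bound over the $\binom{|e|}{2}$ pairs inside $e$ wastes a factor as large as $n^2$, which is unaffordable. To avoid this I would exploit the Gaussian-process structure underlying the orthogonal separator: $T$ is essentially a level set of a random linear functional $\iprod{g, \cdot}$ with Gaussian $g$, and $e \in \Ecut{T}$ iff the collection $\{\iprod{g, \bar{u}}\}_{u \in e}$ straddles the chosen threshold. Controlling this event via the range of the Gaussian process over $e$, using Slepian-type comparison inequalities, gives $\E{\max_{u \in e}\iprod{g,\bar{u}} - \min_{u \in e}\iprod{g,\bar{u}}} \leq O(\sqrt{\log |e|}) \cdot \max_{u, w \in e} \|\bar{u} - \bar{w}\|$. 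Integrating against the orthogonal separator's threshold distribution should then yield $\Prob{e \in \Ecut{T}} \leq O(\sqrt{\log n}) \cdot z_e \cdot \alpha$, absorbing $\sqrt{\log |e|} \leq \sqrt{\log n}$ into the overall bound.

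Summing over hyperedges gives $\E{|\Ecut{T}|} \leq O(\sqrt{\log n}) \cdot \alpha \cdot \sum_e z_e \leq O(\sqrt{\log n}) \cdot \phi^*_{H,\delta} \cdot \E{|T|}$, i.e.\ expected expansion $O(\sqrt{\log n}) \cdot \phi^*_{H,\delta}$. The additional $\delta^{-1} \log \delta^{-1} \log\log \delta^{-1}$ factors in the theorem come from the price of tuning the orthogonal separator to produce sets of size $\approx \delta n$ and from the final size-capping step, exactly as in BFK; the $\varepsilon$-dependent constant is absorbed into the trimming argument.
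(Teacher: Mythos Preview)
Your outline has the right shape up to the crucial step, but the bound you claim for the hyperedge-cut probability does not follow from the Gaussian-range argument you sketch. The range estimate
\[
\E{\max_{u\in e}\iprod{g}{\bar u}-\min_{u\in e}\iprod{g}{\bar u}}\;\leq\;O(\sqrt{\log |e|})\cdot \max_{u,w\in e}\|\bar u-\bar w\|
\]
is linear in the $\ell_2$-diameter of $e$, i.e.\ in $\sqrt{z_e}$, not in $z_e$. Integrating this against any threshold distribution gives a cut probability of the form $\alpha\cdot C(|e|)\cdot\max_{u,w\in e}\|\bar u-\bar w\|$ (possibly times a length factor $\min_{w\in e}\|\bar w\|$ after normalization), not $\alpha\cdot O(\sqrt{\log n})\cdot z_e$. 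There is no missing factor that converts $\sqrt{z_e}$ into $z_e$; ``absorbing $\sqrt{\log |e|}\le\sqrt{\log n}$'' accounts for the logarithm but not for the exponent on the diameter. In fact, the paper carries out exactly your Gaussian computation in \prettyref{lem:omega-l2} and shows that it yields an $\ell_2$--$\ell_2^2$ separator (condition~\prettyref{eq:h-orth-sep2}), which in turn proves \prettyref{thm:hsse-2}---a Cheeger-type guarantee $\phi(S)\lesssim\sqrt{\phi^*}$---and \emph{not} the multiplicative bound of \prettyref{thm:hsse}.

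To obtain the genuine $\ell_2^2$ bound $\Prob{e\text{ cut}}\leq \alpha D\max_{u,v\in e}\|\bar u-\bar v\|^2$ needed here, the paper replaces the Gaussian projection by the ARV/ALN structure theorem (\prettyref{thm:aln}): one thresholds on the $\ell_2^2$-distance $d(U,\varphi(\bar w))$ to a random set $U$, and then the range $\tau_M-\tau_m$ over $w\in e$ is controlled by the $\ell_2^2$-triangle inequality, giving $\tau_M-\tau_m\leq\max_{u,v\in e}\|\varphi(\bar u)-\varphi(\bar v)\|^2$ directly (see \prettyref{lem:omega}). This is the step your proposal is missing; without it the argument proves a different theorem. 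A secondary point: the $\delta^{-1}$ loss you attribute to ``exactly as in BFK'' is genuinely worse than the $\log(1/\delta)$ that BFK pay for graphs, and the paper shows (\prettyref{thm:gap}) that this linear-in-$1/\delta$ dependence is forced by the SDP; it comes from the $\alpha\ge\Omega(1/m)$ scale of the separator, not merely from size-capping.
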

\noindent We state our second result, \prettyref{thm:hsse-2}, 
for $r$-uniform hypergraphs. We present and prove a more general
\prettyref{thm:hsse-3} that applies to any hypergraphs in \prettyref{sec:algoHSSE-2}. 
\begin{theorem}\label{thm:hsse-2}
There is a randomized polynomial-time algorithm for the Hypergraph Small Set Expansion problem that
given an $r$--uniform hypergraph $H = (V,E)$ with maximum degree $d_{\text{max}}$, and parameters $\varepsilon \in (0,1)$ and $\delta \in (0,1/2)$
finds a set $S \subset V$ of size at most $(1+\varepsilon)\delta n$ such that 
\[
\phi(S) \leq \tilde O_{\varepsilon} \left(\delta^{-1} \left(\sqrt{ 
d_{\text{max}}\frac{\log r}{r}
\phi^*_{H,\delta}} + \phi^*_{H,\delta} \right)\right).
\]
\end{theorem}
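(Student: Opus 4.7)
The plan is to adapt the Raghavendra--Steurer--Tetali algorithm for graph Small Set Expansion to hypergraphs. We start by formulating an SDP relaxation of H-SSE: each vertex $u \in V$ is assigned a unit vector $v_u$ satisfying the $\ell_2^2$-triangle inequalities, with $\delta$-spreading constraints ensuring that the relaxed solution places its mass on roughly $\delta n$ vertices. The SDP objective must upper bound $|\Ecut{S}|$; since a hyperedge $e$ is cut as soon as any two of its vertices are separated, the natural relaxation of its cost is $\max_{u,v \in e}\|v_u - v_v\|^2$. To keep the program convex we introduce an auxiliary ``center'' vector $c_e$ for each hyperedge and charge $\max_{u \in e}\|v_u - c_e\|^2$ (or a polynomial-sized linearization of it), so that the objective becomes a sum of convex quadratic terms, one per hyperedge.

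The rounding will be a radial/threshold procedure of RST type: sample a seed vertex $u_0$ proportionally to its SDP mass and, for a random threshold $t$ drawn from a window of width $W$, output $S_t = \{u : \|v_{u_0} - v_u\|^2 \le t\}$. Standard spread and Markov arguments give $|S_t| \in ((1-\eps)\delta n, (1+\eps)\delta n)$ with non-negligible probability, and the expected number of cut hyperedges over the random threshold is
\begin{equation*}
\sum_{e \in E}\Pr\!\left[S_t \text{ cuts } e\right] \;\le\; \sum_{e \in E}\frac{\max_{u,v \in e}\|v_u - v_v\|^2}{W}.
\end{equation*}
Optimizing $W$ against the size guarantee gives a Cheeger-style $\sqrt{\cdot}$ bound. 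The hypergraph-specific saving of $\sqrt{(\log r)/r}$ over a trivial clique reduction (which would pay a factor of $r$) is obtained by a Gaussian/dimension-reduction step applied to the SDP vectors: projecting to $O(\log r)$ dimensions lets us control $\max_{u,v \in e}\|v_u - v_v\|^2$ in expectation by $O((\log r)/r)\cdot \sum_{u \in e}\|v_u - c_e\|^2$, and combining with the degree bound $|E| \le n d_{\max}/r$ produces the stated dependence $\sqrt{d_{\max}(\log r)/r \cdot \phi^*_{H,\delta}}$. The additive $\phi^*_{H,\delta}$ term will absorb the ``SDP value is too small for Cheeger to help'' regime, and the $\delta^{-1}\log \delta^{-1}\log\log \delta^{-1}$ prefactor is standard: repeat the rounding $O(\log \delta^{-1})$ times and keep the best output.

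The main obstacle will be the rounding analysis, which must simultaneously (i) keep $|S_t|$ within $(1+\eps)\delta n$, (ii) keep $|S_t| = \Omega(\delta n)$ so that $\phi(S_t)$ is controlled by the ratio of expected cut to expected size, and (iii) convert the hyperedge diameter terms into the target $\sqrt{(\log r)/r}$ savings. Step (iii) is the genuinely new ingredient: it relies on the $\max$-over-hyperedge objective behaving well under random projection, which demands a Gaussian-process/concentration argument rather than a pairwise Cauchy--Schwarz. A secondary technical challenge is the degree normalization: while \prettyref{thm:hsse-2} assumes $r$-uniformity and bounded degree, the more general \prettyref{thm:hsse-3} will require a weighted formulation in both the SDP (hyperedges of varying size contribute different diameter terms) and in the seed-sampling distribution for the rounding, which must be degree-weighted to make the expectation calculations carry through unchanged.
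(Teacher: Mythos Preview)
Your proposal has a genuine gap in step~(iii), the ``hypergraph-specific saving''. The claimed inequality --- that after projecting to $O(\log r)$ dimensions one has
\[
\max_{u,v\in e}\|v_u - v_v\|^2 \;\le\; O\!\left(\frac{\log r}{r}\right)\sum_{u\in e}\|v_u - c_e\|^2
\]
in expectation --- is false. Johnson--Lindenstrauss--type projection \emph{preserves} pairwise distances up to constants; it cannot shrink the diameter of a hyperedge relative to the sum of radii by a factor of $r$. Consider a hyperedge with two vertices at distance~$1$ and the remaining $r-2$ vertices at the center $c_e$: the left side is~$1$ and the right side is $O((\log r)/r)$, regardless of projection. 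So the mechanism you propose for the $\sqrt{(\log r)/r}$ factor does not work. Relatedly, your accounting attributes the $d_{\max}/r$ to $|E|\le n d_{\max}/r$; but $|E|$ never enters the expansion ratio directly, and this would double-count the $1/r$ anyway.

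The paper's route is quite different. It does not use ball-growing from a seed vertex. Instead it builds an \emph{$\ell_2$--$\ell_2^2$ hypergraph orthogonal separator}: a random set $S$ with
\[
\Pr[e \text{ cut by } S] \;\le\; \alpha D_{\ell_2^2}\max_{u,v\in e}\|\bar u - \bar v\|^2 \;+\; \alpha D_{\ell_2}(|e|)\cdot \min_{w\in e}\|\bar w\|\cdot \max_{u,v\in e}\|\bar u - \bar v\|.
\]
The $\ell_2$ (not $\ell_2^2$) term is the crucial one. It is obtained by a one-dimensional Gaussian projection followed by a Poisson threshold; the $\sqrt{\log r}$ arises as the expected maximum of $O(r^2)$ correlated Gaussians, giving $D_{\ell_2}(r)=O(\sqrt{\log r}\cdot m\log m\log\log m)$. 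The $1/r$ then comes not from any per-edge inequality but from a \emph{global} Cauchy--Schwarz step: one writes $\min_{w\in e}\|\bar w\|\le \frac{1}{|e|}\sum_{w\in e}\|\bar w\|$, applies Cauchy--Schwarz over all edges, and uses $\sum_{e\ni w}\frac{\log|e|}{|e|}\le d_{\max}\frac{\log r}{r}$ together with the SDP normalization $\sum_w\|\bar w\|^2=1$. This is where the $d_{\max}(\log r)/r$ factor actually originates, and it requires the $\min_w\|\bar w\|$ weight in the separator bound --- something your radial rounding does not produce. Finally, the $\delta^{-1}\log\delta^{-1}\log\log\delta^{-1}$ prefactor is not a repetition artifact; the $\delta^{-1}$ is forced by the SDP integrality gap (it is at least $1/(2\delta)$, see \prettyref{sec:SDPgap}), and the logarithmic factors come from iterating the labeling procedure $\Theta(\log m)$ times with $m=\Theta(1/\delta)$ inside the orthogonal-separator construction.
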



Our algorithms for H-SSE are bi-criteria approximation algorithms in that they output a set $S$ of
size at most $(1+\varepsilon)\delta n$. 
We note that this is similar to the algorithm of~\cite{BFK} for SSE, which also finds a set of size at most 
$(1+\varepsilon) \delta n$ rather than a set of size at most $\delta n$. The algorithm of~\cite{RST} 
finds a set of size $O(\delta n)$.
The approximation factor of our first algorithm does not depend on the size of hyperedges in the input hypergraph. 
It has the same dependence on $n$ as the algorithm of~\cite{BFK} for SSE. However, the dependence on $1/\delta$ is quasi-linear; whereas it is logarithmic in the algorithm of~\cite{BFK}. In fact, we show that the integrality gap of 
the standard SDP relaxation for H-SSE is at least linear in $1/\delta$ (\prettyref{thm:gap}).
The approximation guarantee of our second algorithm is analogous to that
of the algorithm of~\cite{RST}.

\paragraph{Small Set Vertex Expansion.}
Our techniques can also be used to obtain an approximation algorithm for Small Set Vertex Expansion (SSVE)
in graphs.
\begin{problem}[Small Set Vertex Expansion Problem]
Given graph $G = (V,E)$, the vertex expansion of a set $S \subset V$ is defined as
\[ \phi^V(S) = \frac{ |\{  u \in \bar{S} : \exists\, v \in S \textrm{ such that } \{u,v \} \in E   \}  |  }{|S|}  \]
Given a parameter $\delta \in (0,1/2]$, 
the Small Set Vertex Expansion  problem (SSVE) is to find a set $S \subset V$ of size at most $\delta n$ that minimizes $\phi^V(S)$.
The value of the optimal solution to SSVE is called the small set vertex expansion of $G$.
That is, for $\delta \in (0,1/2]$, the small set expansion $\phi^V_{G,\delta}$  of a graph $G = (V,E)$ is defined as
\[ \phi^V_{G,\delta} = \min_{\substack{S\subset V\\ 0 < |S| \leq \delta n}} \phi^V(S). \]
\end{problem}
Small Set Vertex Expansion recently gained interest due to its connection to obtaining 
sub\-exponen\-tial-time, constant factor approximation algorithms for many combinatorial problems like Sparsest Cut and Graph Coloring (\cite{ag11,lrv12}). Using  a reduction from vertex expansion in graphs to hypergraph expansion, we can get
an approximation algorithm for SSVE having the same approximation guarantee as that for H-SSE.

\begin{theorem}
\label{thm:hyper-vert-exp}
There exist absolute constants $c_1, c_2 \in \mathbb{R}^+$ such that for 
 every  graph $G=(V,E)$, 
there exists a polynomial time computable hypergraph 
$H = (V',E')$ 
such that
$$ c_1 \phi^*_{H,\delta} \leq \phi^V_{G,\delta} \leq c_2 \phi^*_{H,\delta}. $$
Also, $\agdeg \leq \log_2 (d_{\text{max}} +1)$, where $d_{\text{max}}$ is the maximum degree of $G$ (where 
$\agdeg$ is defined in \prettyref{def:agdeg}).
\end{theorem}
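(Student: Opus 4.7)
The plan is to construct an explicit hypergraph $H$ whose cuts encode the vertex boundaries of $G$, and then verify separately the two-sided expansion bound and the degree bound.

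For the construction, I would start from the natural idea of putting, for every $v \in V$, a hyperedge $e_v = \{v\} \cup N_G(v)$ on the vertex set $V' = V$. A direct calculation gives $|\Ecut{S}| = |\Nin(S)| + |\Nout(S)|$, so that $\phi_H(S) = \phi^V_G(S) + |\Nin(S)|/|S|$. To turn this into a constant-factor equivalence with $\phi^V_G$, I would modify the construction by grouping the neighbors of each $v$ according to the binary representation of their indices: in place of the single large hyperedge $e_v$, introduce $\lceil\log_2(d_v+1)\rceil$ smaller hyperedges together with a small number of auxiliary vertices attached to $v$, so that (i)~each vertex in $V'$ lies in at most $\lceil\log_2(d_v+1)\rceil \le \log_2(d_{\max}+1)$ hyperedges, and (ii)~the artifact $|\Nin(S)|/|S|$ is absorbed into a constant multiple of $|\Nout(S)|/|S|$ via the auxiliary vertices.

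With the construction in hand, the two inequalities follow from matched encoding/decoding maps between subsets of $V$ and subsets of $V'$. For $\phi^*_{H,\delta} \le \phi^V_{G,\delta}/c_1$: start from a set $S \subseteq V$ with $|S|\le \delta n$ and small $\phi^V_G(S)$, lift it to a set $S' \subseteq V'$ by including the auxiliary vertices of each $v \in S$, verify $|S'| \le \delta |V'|$ (up to a $1+o(1)$ factor), and bound the cut as $|\Ecut{S'}| = O(|\Nout(S)|)$ using the gadget design. For $\phi^V_{G,\delta} \le c_2 \phi^*_{H,\delta}$: start from $S' \subseteq V'$ with small $\phi_H(S')$, project to $S = S'\cap V$ (after rounding the auxiliary coordinates by a majority rule if necessary), bound $|S|$ in terms of $|S'|$, and show that each vertex of $\Nout(S)$ can be charged to at least one cut hyperedge of $H$. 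The degree bound $\agdeg \le \log_2(d_{\max}+1)$ then falls out of the construction as soon as one counts the hyperedges containing any single vertex.

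The main obstacle I expect is designing the gadget so that both the constant-factor expansion equivalence and the logarithmic degree bound hold simultaneously. The naive construction $e_v = \{v\}\cup N_G(v)$ fails the equivalence, as illustrated by taking $S$ to be the leaves of a star $K_{1,k}$: one has $\phi^V_G(S) = 1/|S|$ while $\phi_H(S) \ge 1$, so the ratio is unbounded. Binary splitting without auxiliary vertices would, in the other direction, lose a factor of $\log(d_{\max}+1)$ in the cut count. The gadget must therefore ensure that inner-boundary vertices and their outer-boundary neighbors share only a constant-bounded number of cut hyperedges, while simultaneously keeping the small-set property under the lift/projection maps. Balancing these competing demands is the technical heart of the proof.
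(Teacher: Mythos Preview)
Your diagnosis of the naive construction is exactly right: setting $e_v=\{v\}\cup N_G(v)$ on $V'=V$ gives $\phi_H(S)=\bigl(|\Nin(S)|+|\Nout(S)|\bigr)/|S|$, i.e.\ the \emph{symmetric} vertex expansion $\Phi^V_G(S)$, not $\phi^V_G(S)$. But the paper does not try to repair this mismatch with a hypergraph gadget. Instead it invokes, as a black box, the reduction of Louis, Raghavendra, and Vempala which produces from $G$ a graph $G'$ (same vertex set, same maximum degree) with $c_1\,\phi^V_{G,\delta}\le \Phi^V_{G',\delta}\le c_2\,\phi^V_{G,\delta}$. The naive construction is then applied to $G'$, and since $\phi_H(S)=\Phi^V_{G'}(S)$ holds \emph{exactly} for every $S$, the two-sided bound is immediate. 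The bound on $\agdeg$ comes from choosing $e_v^{\circ}=\{v\}$ (part~3 of Claim~\ref{claim:agdeg}): each vertex lies in exactly one $e^{\circ}$, so $\eta(u)=\log_2|e_u|=\log_2(\deg u+1)\le\log_2(d_{\max}+1)$.

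Your proposed route---binary-splitting each $e_v$ and adding auxiliary vertices---is aimed at the wrong obstruction. The star example you raise is not a degree or hyperedge-size problem; it is the intrinsic gap between $|\Nin(S)|$ and $|\Nout(S)|$, which can be $\Theta(|S|)$ versus $1$. Splitting $e_v$ into pieces does nothing to change which vertices are inner versus outer boundary, so the ratio $\phi_H(S)/\phi^V_G(S)$ remains unbounded regardless of how the pieces are arranged. Any fix has to alter the \emph{graph} (e.g.\ the edge-subdivision trick underlying the symmetric/asymmetric reduction), not just the hypergraph encoding. So the gap in your plan is that the gadget you sketch cannot absorb $|\Nin(S)|/|S|$ into $O(|\Nout(S)|/|S|)$; you need the $\phi^V\!\leftrightarrow\!\Phi^V$ reduction, after which the naive construction already does the job and no gadget is required.
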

\noindent From this theorem, \prettyref{thm:hsse} and \prettyref{thm:hsse-3} we immediately
get algorithms for SSVE.
\begin{theorem}[Corollary to \prettyref{thm:hsse} and \prettyref{thm:hyper-vert-exp}]
There is a randomized polynomial-time approximation algorithm for the Small Set Vertex Expansion problem that
given a graph $G = (V,E)$, and parameters $\varepsilon \in (0,1)$ and $\delta \in (0,1/2))$
finds a set $S \subset V$ of size at most $(1+\varepsilon)\delta n$ such that 
\[
\phi^V(S) \leq O_{\varepsilon}\left(\sqrt{\log n}\,\delta^{-1}\log \delta^{-1}
\log\log \delta^{-1}  \cdot  \phi^V_{G,\delta}\right)
,
\]
That is, the algorithm gives $O(\sqrt{\log n})$ approximation when $\delta$ and $\varepsilon$ are fixed.
\end{theorem}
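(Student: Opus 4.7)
The plan is to directly chain Theorem~\ref{thm:hyper-vert-exp} with Theorem~\ref{thm:hsse}. Given input $G=(V,E)$, $\varepsilon$, and $\delta$, first apply the polynomial-time reduction from Theorem~\ref{thm:hyper-vert-exp} to build the hypergraph $H=(V',E')$ with $c_1\phi^*_{H,\delta} \le \phi^V_{G,\delta} \le c_2\phi^*_{H,\delta}$; then run the H-SSE algorithm from Theorem~\ref{thm:hsse} on $H$ with parameters $\delta$ and $\varepsilon'$ (to be chosen as a constant multiple of $\varepsilon$) to obtain a set $S'\subseteq V'$ with $|S'|\le(1+\varepsilon')\delta|V'|$ and $\phi_H(S') \le \tilde O_{\varepsilon'}(\delta^{-1}\sqrt{\log|V'|})\,\phi^*_{H,\delta}$; finally use the reduction in reverse to extract from $S'$ a set $S\subseteq V$ that witnesses the claimed vertex-expansion bound in $G$.

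The approximation guarantee then follows by a short chain: the output satisfies $\phi_H(S')\le \tilde O_{\varepsilon'}(\delta^{-1}\sqrt{\log|V'|})\,\phi^*_{H,\delta}$, and by the left inequality of Theorem~\ref{thm:hyper-vert-exp} we have $\phi^*_{H,\delta}\le c_1^{-1}\phi^V_{G,\delta}$. Under the reduction, $S'$ translates to a set $S\subseteq V$ whose vertex expansion in $G$ is $O(\phi_H(S'))$ up to the same absolute constants that appear in Theorem~\ref{thm:hyper-vert-exp}. Because the reduction runs in polynomial time we have $|V'|\le \mathrm{poly}(n)$, so $\log|V'|=O(\log n)$ and the $\sqrt{\log n}$ dependence is preserved. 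Putting this together yields $\phi^V(S)\le O_\varepsilon(\sqrt{\log n}\,\delta^{-1}\log\delta^{-1}\log\log\delta^{-1})\,\phi^V_{G,\delta}$, as desired.

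The one nontrivial point is book-keeping on the size constraint: we must choose $\varepsilon'$ and invoke the size correspondence from the reduction so that $|S'|\le(1+\varepsilon')\delta|V'|$ pulls back to $|S|\le(1+\varepsilon)\delta n$ (and not merely $|S|\le O(\delta n)$). Provided the map between $V$ and $V'$ used in Theorem~\ref{thm:hyper-vert-exp} is size-faithful up to a constant factor $\kappa$ (i.e.\ the image in $V$ of any set of cardinality at most $\alpha |V'|$ has cardinality at most $\kappa\alpha n$), we simply pick $\varepsilon' = \varepsilon/(2\kappa)$ and adjust the $\delta$ parameter passed to the H-SSE algorithm by the same factor so that the pulled-back set has size at most $(1+\varepsilon)\delta n$; this only changes the hidden constants in the $O_\varepsilon(\cdot)$.

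The main obstacle — really the only nonroutine aspect of this corollary — is verifying that the reduction of Theorem~\ref{thm:hyper-vert-exp} is indeed size-preserving in the above sense for both directions of the bound, so that running the bi-criteria algorithm on $H$ yields a bi-criteria guarantee on $G$ with the same $(1+\varepsilon)$ slack. Everything else is a direct composition of the two theorems and a $\log|V'|=O(\log n)$ estimate.
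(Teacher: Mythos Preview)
Your approach is correct and is exactly what the paper does: the paper gives no separate proof and states the result as an immediate corollary obtained by composing Theorems~\ref{thm:hsse} and~\ref{thm:hyper-vert-exp}. You actually go further than the paper in flagging the size-preservation bookkeeping. For the record, in the paper's construction (proof of Theorem~\ref{thm:hyper-vert-exp}) the hypergraph $H$ is built on the \emph{same} vertex set $V'$ as the intermediate graph $G'$ and satisfies $\phi_H(S)=\Phi^V_{G'}(S)$ for every $S\subset V'$, so the $H\to G'$ step is lossless on both size and expansion; the remaining $G'\to G$ step relies on the reduction of~\cite{lrv13} being constructive in the reverse direction with a size-faithful correspondence, which the paper takes for granted.
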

\begin{theorem}[Corollary to \prettyref{thm:hsse-3} and \prettyref{thm:hyper-vert-exp}]\label{thm:v-expansion-2}
There is a randomized polynomial-time algorithm for the Small Set Vertex Expansion problem that
given a graph $G = (V,E)$ of maximum degree $d_{max}$,  parameters $\varepsilon \in (0,1)$ and $\delta \in (0,1/2)$
finds a set $S \subset V$ of size at most $(1+\varepsilon)\delta n$ such that 
\begin{align*}
\phi^V(S) &\leq O_{\varepsilon}\left(\sqrt{\phi^V_{G,\delta} \log d_{\text{max}}}\cdot\delta^{-1}\log \delta^{-1}
\log\log \delta^{-1}  + \delta^{-1} \phi^V_{G,\delta}\right) \\
&=
\tilde O_{\varepsilon}\left(\delta^{-1}\sqrt{\phi^V_{G,\delta} \log d_{\text{max}} } + \delta^{-1} \phi^V_{G,\delta}\right)
.
\end{align*}
\end{theorem}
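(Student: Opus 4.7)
\textbf{Proof plan for Theorem \ref{thm:v-expansion-2}.} The statement is explicitly flagged as a corollary of Theorems \ref{thm:hsse-3} and \ref{thm:hyper-vert-exp}, so the plan is to combine the two in the obvious way: reduce the input graph $G$ to a hypergraph $H$ via the construction of Theorem \ref{thm:hyper-vert-exp}, run the algorithm of Theorem \ref{thm:hsse-3} on $H$, and translate the resulting small-expansion set back into $G$. The only real content is tracking how the various parameters transform under the reduction.

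First I would invoke Theorem \ref{thm:hyper-vert-exp} on $G=(V,E)$. This produces, in polynomial time, a hypergraph $H=(V',E')$ satisfying
\[
c_1\,\phi^*_{H,\delta} \;\le\; \phi^V_{G,\delta} \;\le\; c_2\,\phi^*_{H,\delta},
\]
and the quantity $\agdeg$ (the parameter that plays, for arbitrary hypergraphs, the role that $d_{\text{max}}\log r/r$ plays in the $r$-uniform case of Theorem \ref{thm:hsse-2}) is bounded by $\log_2(d_{\text{max}}+1) = O(\log d_{\text{max}})$. In particular, the optimal H-SSE value of $H$ is within a constant factor of $\phi^V_{G,\delta}$.

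Next I would apply the algorithm guaranteed by Theorem \ref{thm:hsse-3} to $H$ with the same parameters $\varepsilon$ and $\delta$. That algorithm returns a set $S'\subseteq V'$ of size at most $(1+\varepsilon)\delta |V'|$ whose hypergraph expansion satisfies
\[
\phi_H(S') \;\le\; \tilde O_{\varepsilon}\!\left(\delta^{-1}\Bigl(\sqrt{\agdeg\cdot \phi^*_{H,\delta}}+\phi^*_{H,\delta}\Bigr)\right).
\]
Plugging in $\agdeg = O(\log d_{\text{max}})$ and $\phi^*_{H,\delta} = \Theta(\phi^V_{G,\delta})$ gives exactly the bound stated in the theorem, up to the hidden $\log\delta^{-1}\log\log\delta^{-1}$ factor.

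Finally I would use the same reduction of Theorem \ref{thm:hyper-vert-exp} in the reverse direction to turn $S'\subseteq V'$ into a set $S\subseteq V$ with $|S|\le (1+\varepsilon)\delta n$ and $\phi^V(S) \le c_2\,\phi_H(S')$; this is the step that actually uses the two-sided inequality in Theorem \ref{thm:hyper-vert-exp}, applied now to arbitrary (not just optimal) sets. The main obstacle is therefore not analytical but bookkeeping: one has to check that the reduction of Theorem \ref{thm:hyper-vert-exp} is genuinely bidirectional on \emph{every} cut (so that $\phi_H(S')$ bounds $\phi^V(S)$, not just $\phi^*_{H,\delta}$ bounding $\phi^V_{G,\delta}$) and that the size constraint on $S'$ in $V'$ yields the desired bound $|S|\le (1+\varepsilon)\delta n$ in $V$. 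Assuming the reduction has this form, as is standard and as is implicitly required for the stated corollary, the proof is complete.
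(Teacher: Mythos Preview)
Your proposal is correct and matches the paper's approach; the paper in fact gives no explicit proof at all, treating the statement as an immediate corollary, so your plan is precisely the intended argument. The bookkeeping concern you raise is legitimate but is resolved by the proof of \prettyref{thm:hyper-vert-exp} itself: that proof establishes the equality $\phi_H(S) = \Phi^V_{G'}(S)$ for \emph{every} set $S$ (not just optimal ones), so the per-set translation you need is available.
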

We note that the Small Set Vertex Expansion problem for $\delta = 1/2$ is just the Vertex Expansion problem. In that case,
\prettyref{thm:v-expansion-2} gives the same approximation guarantee as the algorithm of~\cite{lrv13}.

\smallskip

\noindent\textbf{Techniques.} 
Our general approach to solving H-SSE is similar to the approach of~\citet{BFK}. We recall how the algorithm of \citet{BFK}
for (graph) SSE works. The algorithm solves a semidefinite programming relaxation for SSE and gets an SDP solution. The SDP solution
assigns a vector $\bar u$ to each vertex $u$. Then the algorithm generates \textit{an orthogonal separator}. Informally, an orthogonal separator $S$  with distortion $D$ is a random subset of vertices such that 
\begin{itemize}
\item[(a)]
If $\bar u$ and $\bar v$ are close to each other then the probability that $u$ and $v$ are separated by $S$ is small; namely, it is at most $\alpha D \|\bar u - \bar v\|^2$, where $\alpha$ is a normalization factor such that $\Prob{u \in S} = \alpha \|\bar u\|^2$. 
\item[(b)] If the angle between $\bar u$ and $\bar v$
is larger than a certain threshold, then the probability that both $u$ and $v$ are in $S$ is much smaller than the probability that one of them is in $S$. 
\end{itemize}
\citet{BFK} showed
that condition (b) together with SDP constraints implies that $S$ is of size at most $(1+\varepsilon)\delta n$
with sufficiently high probability. Then condition (a) implies that the expected number of cut edges is at most $D$ 
times the SDP value. That means that $S$ is a $D$--approximate solution to SSE.

 If we run this algorithm on an instance of H-SSE, we will still find a set of size at most $(1+\varepsilon)\delta n$, but the cost of the solution might be very high.
Indeed, consider a hyperedge $e$. Even though every two vertices $u$ and $v$ in $e$ are \textit{unlikely} to be separated by $S$, at least one pair out of $\binom{|e|}{2}$ pairs of vertices is quite likely to be separated by $S$;
hence, $e$ is quite likely to be cut by $S$. To deal with this problem, we develop \textit{hypergraph orthogonal separators}.
In the definition of a hypergraph orthogonal separator, we strengthen condition (a) by requiring that
a hyperedge $e$ is cut by $S$ with small probability if all vertices in $e$ are close to each other. Specifically,
we require that 
\begin{equation}
\Prob{e \text{ is cut by } S} \leq \alpha D \max_{u,v\in e}\|\bar u-\bar v\|^2. \label{eq:h-orth-sep1}
\end{equation}
We show that there is a hypergraph orthogonal separator with distortion
proportional to $\sqrt{\log n}$ (the distortion also depends on parameters of the orthogonal separator).
Plugging this hypergraph orthogonal separator in the algortihm of~\citet{BFK}, we get~\prettyref{thm:hsse}.
We also develop another variant of hypergraph orthogonal separators, $\ell_2$--$\ell_2^2$ orthogonal separators.
An $\ell_2$--$\ell_2^2$ orthogonal separator with $\ell_2$--distortion $ D_{\ell_2}(r)$ and $\ell_2^2$--distortion
$D_{\ell_2^2}$ satisfies the following condition\footnote{It may look strange that we have two terms in the bound.
One may expect that we can either have only term $D_{\ell_2^2}\max_{u,v\in e}\|\bar u-\bar v\|^2$ (as in the previous definition)
or only term $D_{\ell_2}(|e|) \cdot \min_{w\in E} \|\bar w\| \cdot \max_{u,v\in e}\|\bar u-\bar v\|$. However, the latter is not possible --- there is no
$\ell_2$--$\ell_2^2$ separator with $D_{\ell_2^2} = 0$.}
\begin{equation}
\Prob{e \text{ is cut by } S} \leq \alpha D_{\ell_2}(|e|) \cdot \min_{w\in E} \|\bar w\| \cdot \max_{u,v\in e}\|\bar u-\bar v\| +  \alpha D_{\ell_2^2}\cdot\max_{u,v\in e}\|\bar u-\bar v\|^2.
\label{eq:h-orth-sep2}
\end{equation}
We show that there is an $\ell_2$-$\ell_2^2$ hypergraph orthogonal separator whose $\ell_2$ and $\ell_2^2$ distortions do not depend on $n$
(in contrast, there is no hypergraph orthogonal separator whose distortion does not depend on $n$). This result yields \prettyref{thm:hsse-2}.

We now give a brief conceptual overview of our construction of hypergraph orthogonal separators. We use the framework developed in~\citep*[Section 4.3]{CMM} for (graph) orthogonal separators. 
For simplicity, we ignore vector normalization steps in this overview; we do not explain how we take into account vector lengths. Note, however, that these normalization steps are crucial.
We first design a procedure that partitions
the hypergraph into two pieces (the procedure labels every vertex with either $0$ or $1$). In a sense, each set $S$ in the partition 
is a ``very weak'' hypergraph orthogonal  separator. It satisfies property \prettyref{eq:h-orth-sep1} with $D_0 \sim
\sqrt{\log n}\log\log (1/\delta)$ and $\alpha_0 = 1/2$
and a weak variant of property (b): if the angle between vectors $\bar u$ and $\bar v$
is larger than the threshold then events $u\in S$ and $v\in S$ are ``almost'' independent.
We repeat the procedure $l = \log_2 (1/\delta) +O(1)$ times and obtain a partition of graph into $2^l=O(1/\delta)$ pieces.
Then we randomly choose one set $S$ among them; this set $S$ is our hypergraph orthogonal separator. Note that
that by running the procedure many times we decrease exponentially in $l$ the probability that two vertices, as in condition (b), belong to $S$. So condition (b) holds for $S$.
Also, we affect the distortion in~\prettyref{eq:h-orth-sep1} in two ways.
First, the probability that the edge is cut increases by a factor of $l$. That is, we get
$\Prob{e \text{ is cut by } S} \leq l \times \alpha_0 D_0 \max_{u,v\in e}\|\bar u-\bar v\|^2$.
Second, the probability that we choose a vertex $u$ goes down from $\|\bar u\|^2/2$ to $\Omega(\delta) \|\bar u\|^2$ since roughly speaking we choose one set $S$ among $O(1/\delta)$ possible sets. That is, the parameter $\alpha$ of $S$ is $\Omega(\delta)$.
Therefore, $\Prob{e \text{ is cut by } S} \leq \alpha (\alpha_0 l D_0/\alpha) \max_{u,v\in e}\|\bar u-\bar v\|^2$.
That is, we get a hypergraph orthogonal separator with distortion $(\alpha_0 l D_0/\alpha) \sim \tilde O(\delta^{-1}\sqrt{\log n})$.
The construction of $\ell_2^2$ orthogonal separators is similar but a bit more technical.

\medskip

\noindent\textbf{Organization.} 
We present our SDP relaxation and introduce our main technique, hypergraph orthogonal separators, in \prettyref{sec:prelim}. 
We describe our first algorithm for H-SSE in \prettyref{sec:algoHSSE}, and then 
describe an algorithm that generates hypergraph orthogonal separators in \prettyref{sec:orthogonal}. 
We define $\ell_2$--$\ell_2^2$ hypergraph orthogonal separators,
give an algorithm that generates them, and then present our second algorithm for H-SSE in \prettyref{sec:orthogonal-l2-l22} and \prettyref{sec:algoHSSE-2}.
Finally, we show a simple SDP integrality
gap for H-SSE in \prettyref{sec:SDPgap}. This integrality gap also gives a lower bound on the quality of $m$-orthogonal separators.
We give a proof of \prettyref{thm:hyper-vert-exp} in \prettyref{sec:vertex-to-hyper}.

\section{Preliminaries}\label{sec:prelim}
\subsection{SDP Relaxation for Hypergraph Small Set Expansion}
We use the SDP relaxation for H-SSE shown in \prettyref{fig:SDP}. There is an SDP variable $\bar u$ for every vertex $u\in V$.
\begin{figure}\label{fig:SDP}
\begin{align}
\text{minimize }& \sum_{e\in E} \max_{u,v\in e} \|\bar u - \bar v\|^2\\
\text{subject to:} \notag\\
&\sum_{v\in V} \langle \bar u, \bar v\rangle \leq \delta n \cdot \|\bar u\|^2 && \text{for every } u\in V\\
&\sum_{u\in V} \|\bar u\|^2 = 1\\
&\|\bar u-\bar v\|^2 + \|\bar v-\bar w\|^2 \geq \|\bar u-\bar w\|^2 && \text{for every } u,v,w\in V \label{eq:triangle-ineq1}\\
& 0 \leq \langle \bar u, \bar v \rangle \leq \| u \|^2 && \text{for every } u,v\in V. \label{eq:triangle-ineq2}
\end{align}
\caption{SDP relaxation for H-SSE}
\end{figure}
Every combinatorial solution $S$ (with $|S| \leq \delta n$) defines the corresponding (intended) SDP solution: 
$\bar u = \frac{e}{\sqrt{|S|}}$, if  $u\in S$; $\bar u = 0$, otherwise,
where $e$ is a fixed unit vector. It is easy to see that this solution satisfies all SDP constraints.
Note that $\max_{u,v\in e} \|\bar u - \bar v\|^2$ is equal to $1/|S|$, if $e$ is cut, and to 0, otherwise.
Therefore, the objective function equals 
\[
\sum_{e\in E} \max_{u,v\in e} \|\bar u - \bar v\|^2 = \sum_{e \in \Ecut{S}} \frac{1}{|S|} = \frac{\Ecut{S}}{S}=\phi(S).
\]
Thus our SDP for H-SSE is indeed a relaxation.

\subsection{Hypergraph Orthogonal Separators}
The main technical tool for proving \prettyref{thm:hsse} is  \textit{hypergraph orthogonal separators}. 
\textit{Orthogonal separators} were introduced by~\citet*{CMM} (see also 
\citet{BFK}, \citet*{LM}, and \citet*{MM}) and were previously used for solving Unique Games and various graph partitioning problems.
In this paper, we extend the technique of orthogonal separators to hypergraphs and introduce hypergraph orthogonal separators.
We then use hypergraph orthogonal separators to solve H-SSE.
In \prettyref{sec:orthogonal-l2-l22}, we introduce another
version of hypergraph orthogonal separators, $\ell_2$--$\ell_2^2$
hypergraph orthogonal separators, and then use them to prove
\prettyref{thm:hsse-2} and \prettyref{thm:hsse-3}. 

\begin{definition}[Hypergraph Orthogonal Separators]\label{def:h-othogonal-sep}
Let $\set{\bar u:u\in V}$ be a set of vectors in the unit ball that satisfy $\ell_2^2$--triangle 
inequalities~\prettyref{eq:triangle-ineq1} and \prettyref{eq:triangle-ineq2}. We say that a random set $S\subset V$ is  \textit{a hypergraph $m$-orthogonal separator}
with distortion $D \geq 1$, probability scale $\alpha > 0$, and separation threshold $\beta \in (0,1)$ if it satisfies the following properties.
\begin{enumerate}
\item For every $u\in V$, 
\[\Pr(u\in S) = \alpha \|\bar u\|^2.\]
\item For every $u$ and $v$ such that $\| \bar u - \bar v \|^2 \geq \beta \min(\|\bar u\|^2, \|\bar v\|^2)$
\[ 
\Prob{u\in S \text{ and } v\in S} \leq \alpha \frac{\min (\|\bar u\|^2, \|\bar v\|^2)}{m}.
\]
\item For every $e\subset V$,
\[
\Prob{e \text{ is cut by }{S}} \leq \alpha D \max_{u,v\in e} \| \bar u - \bar v\|^2.
\]
\end{enumerate}
\end{definition}
\noindent 
The definition of a hypergraph $m$-orthogonal separator is similar to that of a (graph) 
\mbox{$m$-orthogonal} separator:  a random set $S$ is an {$m$-orthogonal} separator if it satisfies properties~1, 2, and property~$3'$,
which is property 3 restricted to edges $e$ of size 2.
\begin{enumerate}
\item[$3'$.] For every $(u,v)$,
$
\Prob{e \text{ is cut by }{S}} \leq \alpha D \| \bar u - \bar v\|^2
$.
\end{enumerate}
In this paper, we design an algorithm that generates a hypergraph $m$-orthogonal separator 
with distortion $O_{\beta}(\sqrt{\log n}\cdot m\log m\log\log m)$. 
We note that the distortion of \textit{any} hypergraph orthogonal separator
must depend on $m$ at least linearly (see \prettyref{sec:SDPgap}).
We remark that there are two constructions of (graph) orthogonal separators, ``orthogonal separators via $\ell_1$''
and ``orthogonal separators via $\ell_2$'', with distortions, 
$O_{\beta}(\sqrt{\log n} \log m)$ and $O_{\beta}(\sqrt{\log n \log m})$, respectively (presented in
\citep*{CMM}). 
Our construction of hypergraph orthogonal separators uses the framework of orthogonal separators via $\ell_1$.
We prove the following theorem in \prettyref{sec:orthogonal}. 

\begin{theorem}
\label{thm:hyper-orth-sep}
There is a polynomial-time randomized algorithm that given a set of vertices $V$, a set of vectors $\{\bar u\}$ satisfying $\ell_2^2$--triangle inequalities 
\prettyref{eq:triangle-ineq1} and \prettyref{eq:triangle-ineq2}, parameters
$m \geq 2$ and $\beta\in(0,1)$, generates a hypergraph $m$-orthogonal separator with probability scale $\alpha  \geq 1 / n$ 
and distortion $D = O(\beta^{-1} m\log m\log\log m\times \sqrt{\log n})$. 
\end{theorem}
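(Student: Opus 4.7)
The plan is to follow the high-level strategy sketched in the introduction, adapting the CMM ``orthogonal separators via $\ell_1$'' framework to hypergraphs. The orthogonal separator $S$ is built in three stages: first, design a basic random $2$-coloring procedure with a \emph{hyperedge-level} cut guarantee; second, iterate it $l = \lceil \log_2 m \rceil + O(1)$ times to obtain a random partition of $V$ into $2^l = \Theta(m)$ pieces; third, select a uniformly random piece and apply a length-rescaling step so that the per-vertex inclusion probabilities scale with $\|\bar u\|^2$.

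For the basic procedure I would work with the normalized vectors $\tilde u = \bar u / \|\bar u\|$ and construct a random subset $T \subset V$ satisfying: (a) $\Pr(u \in T) = 1/2$ for every $u$; (b) for every hyperedge $e$, $\Pr(e \text{ is cut by } T) \leq D_0 \max_{u,v \in e} \|\tilde u - \tilde v\|^2$ for some $D_0 = O(\beta^{-1} \sqrt{\log n}\, \log\log m)$; and (c) for every ``far'' pair with $\|\tilde u - \tilde v\|^2 \geq \beta$, the events $\{u \in T\}$ and $\{v \in T\}$ are approximately independent. The construction is an ARV-style Gaussian projection followed by thresholding, with an $O(\log\log m)$-scale chaining argument that bounds the typical projection gap $|\langle g, \tilde u - \tilde v\rangle|$ in terms of $\|\tilde u - \tilde v\|$. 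Property (b) is the central new ingredient: an edge $e$ can be cut only if the \emph{range} of its projections $\{\langle g, \tilde u\rangle : u \in e\}$ straddles the threshold, and this range is at most $\|g\|\cdot \max_{u,v \in e} \|\tilde u - \tilde v\|$, which avoids the $|e|^2$ loss of a naive pairwise union bound.

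For the iteration stage I run the basic procedure $l$ times with independent randomness to assign each $u$ a label-word $w_u \in \{0,1\}^l$. I then pick a uniform target $w^\star \in \{0,1\}^l$, set $T^\star = \{u : w_u = w^\star\}$, and apply the length rescaling to produce $S$. This gives $\Pr(u \in S) = \alpha \|\bar u\|^2$ with $\alpha = \Theta(2^{-l}) = \Theta(1/m)$, verifying property~1 (and in particular $\alpha \geq 1/n$ since $m \leq n$). Property~3 follows from a union bound over the $l$ rounds, giving $\Pr(e \text{ cut by } T^\star) \leq l\, D_0 \max_{u,v \in e} \|\tilde u - \tilde v\|^2$; converting back to the original vectors using \prettyref{eq:triangle-ineq1} and \prettyref{eq:triangle-ineq2} yields the claimed distortion $D = O(\beta^{-1} m \log m \log\log m \sqrt{\log n})$. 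Property~2 follows because the near-independence of each round compounds across the $l$ rounds, so far pairs land in the same piece of the final partition with probability at most $\Theta(1/m) \cdot \min(\Pr(u \in T^\star), \Pr(v \in T^\star))$.

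The main obstacle is stage (i), and specifically property (b). The CMM framework only requires a pairwise cut bound, and a naive extension would incur an $|e|^2$ loss. The insight that makes the hypergraph analogue go through is the diameter-based control of the projected set under Gaussian rounding: the ``$e$ is cut'' event is governed by the maximum $\ell_2$ distance within $e$, not by a sum over pairs. The ARV chaining then supplies the $\sqrt{\log n}$ factor, and tuning the chaining depth as a function of $m$ is what produces the $\log\log m$ overhead in $D_0$, and hence the final $m \log m \log\log m \sqrt{\log n}$ distortion.
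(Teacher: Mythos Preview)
Your three-stage architecture (basic two-coloring $\to$ $l=\log_2 m + O(1)$ independent repetitions $\to$ random word selection plus length thresholding) matches the paper exactly. The gap is in stage~(i): the concrete mechanism you propose does not deliver the $\ell_2^2$ (squared-distance) hyperedge bound that \prettyref{def:h-othogonal-sep} requires.

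You describe the basic procedure as ``ARV-style Gaussian projection followed by thresholding,'' and you justify the hyperedge bound by saying the range of $\{\langle g,\tilde u\rangle:u\in e\}$ is at most $\|g\|\cdot\max_{u,v\in e}\|\tilde u-\tilde v\|$. That argument controls the cut probability by the $\ell_2$ diameter of $e$, not by $\max_{u,v\in e}\|\tilde u-\tilde v\|^2$; it is precisely the mechanism behind the paper's \emph{other} construction, the $\ell_2$--$\ell_2^2$ separator of \prettyref{sec:orthogonal-l2-l22}, and it does not yield the distortion bound claimed here. The paper instead invokes the Arora--Lee--Naor structure theorem (\prettyref{thm:aln}): it samples a random set $U$ and thresholds on the $\ell_2^2$ distance $d(u,U)=\min_{w\in U}\|\varphi(\bar u)-\varphi(\bar w)\|^2$. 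The hyperedge $e$ is cut only if the random threshold $t$ lands between $\tau_m=\min_{w\in e}d(w,U)$ and $\tau_M=\max_{w\in e}d(w,U)$, and the $\ell_2^2$ triangle inequality gives $\tau_M-\tau_m\le\max_{u,v\in e}\|\varphi(\bar u)-\varphi(\bar v)\|^2$. Because the threshold range is $\Theta(\beta/\sqrt{\log n})$, this yields the required $O(\beta^{-1}\sqrt{\log n})\cdot\max\|\varphi(\bar u)-\varphi(\bar v)\|^2$ bound. This is the step your proposal is missing.

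Two smaller corrections. First, the $\log\log m$ factor does not come from ``tuning the chaining depth'': the basic ALN-based coloring $\omega$ only guarantees $\Pr(\omega(u)\neq\omega(v))\ge 2p$ for far pairs, and the paper XORs $K=\Theta(\log\log m)$ independent copies to boost this to $\tfrac12-\tfrac{1}{\log_2 m}$ (\prettyref{lem:omega-amplified}), paying a factor $K$ in the cut bound. Second, the normalization the paper uses is not $\bar u/\|\bar u\|$ but the map $\varphi$ from \citep{CMM}, which satisfies $\langle\varphi(\bar u),\varphi(\bar v)\rangle=\langle\bar u,\bar v\rangle/\max(\|\bar u\|^2,\|\bar v\|^2)$ and, crucially, preserves the $\ell_2^2$ triangle inequalities needed by ALN; plain unit-normalization does not.
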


\section{Algorithm for Hypergraph Small Set Expansion}\label{sec:algoHSSE}
In this section, we present our algorithm for Hypergraph Small Set Expansion. Our algorithm
uses hypergraph orthogonal spearators that we describe in \prettyref{sec:orthogonal}.
We use the approach of~\citet{BFK}. Suppose that we are given a polynomial-time algorithm that generates
hypergraph $m$-orthogonal separators with distortion $D(m, \beta)$ (with probability scale $\alpha > 1/\poly(n)$). We show how to 
get a  $D^* = 4 D(4/(\varepsilon \delta), \varepsilon/4)$ approximation for H-SSE.

\begin{theorem}\label{thm:HSSE-proof}
There is a randomized polynomial-time approximation algorithm for the Hypergraph Small Set Expansion problem that
given a hypergraph $H = (V,E)$, and parameters $\varepsilon \in (0,1)$ and $\delta \in (0,1/2)$
finds a set $S \subset V$ of size at most $(1+\varepsilon)\delta n$ such that 
\[\phi(S) \leq 4 D(4/(\varepsilon \delta), \varepsilon/4)  \cdot  \phi^*_{H,\delta} .\]
\end{theorem}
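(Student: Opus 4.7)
The plan is to instantiate the Bansal--Feige--Krauthgamer et al. framework for graph SSE, replacing graph orthogonal separators with hypergraph orthogonal separators. First I would solve the SDP of \prettyref{fig:SDP}; since it is a valid relaxation of H-SSE (as noted after the SDP), its optimum $\sdpcost$ satisfies $\sdpcost \leq \phi^*_{H,\delta}$. Then I would invoke the hypothesized algorithm with $m = 4/(\varepsilon\delta)$ and $\beta = \varepsilon/4$ to repeatedly sample hypergraph $m$-orthogonal separators $S$ with probability scale $\alpha \geq 1/\poly(n)$ and distortion $D := D(m,\beta)$. Each sample satisfying $0 < |S| \leq (1+\varepsilon)\delta n$ is recorded as a candidate; the algorithm outputs the candidate of smallest expansion. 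Since $\alpha$ is at least an inverse polynomial, $\poly(n/\varepsilon)$ iterations suffice to succeed with high probability.

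The analysis has two ingredients. The cut bound comes directly from property~3 of the hypergraph orthogonal separator together with the form of the SDP objective:
\[
\Exp\bigl[|\Ecut{S}|\bigr] \;=\; \sum_{e\in E}\Pr[e\text{ is cut by }S] \;\leq\; \alpha D\sum_{e\in E}\max_{u,v\in e}\|\bar u-\bar v\|^2 \;=\; \alpha D\cdot \sdpcost.
\]
The size bound is the crux. Property~1 gives only $\Exp[|S|]=\alpha$, which is typically far below $\delta n$, so I would instead control the second moment $\Exp[|S|^2]=\sum_{u,v}\Pr[u,v\in S]$ by splitting pairs $(u,v)$ into ``far'' and ``close'' according to the hypothesis of property~2. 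For far pairs, property~2 gives $\Pr[u,v\in S]\leq \alpha\min(\|\bar u\|^2,\|\bar v\|^2)/m$, and the estimate $\sum_{u,v}\min(\|\bar u\|^2,\|\bar v\|^2)\leq n$ yields a far contribution of at most $\alpha n/m = \alpha\varepsilon\delta n/4$. For close pairs, the inequalities $\|\bar u-\bar v\|^2 < \beta\min(\|\bar u\|^2,\|\bar v\|^2)$ and $\langle\bar u,\bar v\rangle\leq\|\bar u\|^2$ force $\langle\bar u,\bar v\rangle \geq (1-O(\varepsilon))\min(\|\bar u\|^2,\|\bar v\|^2)$, so that $\Pr[u,v\in S]\leq\alpha\min(\|\bar u\|^2,\|\bar v\|^2)\leq(1+O(\varepsilon))\alpha\langle\bar u,\bar v\rangle$; summing over close $v$ via the SDP constraint $\sum_v\langle\bar u,\bar v\rangle\leq\delta n\|\bar u\|^2$ and then over $u$ gives a close contribution of at most $(1+O(\varepsilon))\alpha\delta n$. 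Hence $\Exp[|S|^2]\leq (1+O(\varepsilon))\alpha\delta n$.

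With these moment bounds in hand, a weighted-Markov argument on the size-biased distribution of $S$ shows that the truncated random set $\tilde S$, defined to equal $S$ when $|S|\leq(1+\varepsilon)\delta n$ and $\emptyset$ otherwise, satisfies $\Exp[|\tilde S|]=\Omega(\varepsilon\alpha)$ while $\Exp[|\Ecut{\tilde S}|]\leq\alpha D\cdot\phi^*_{H,\delta}$. Summing these bounds over iterations and passing to the minimizer, the output candidate achieves $\phi(S)\leq 4 D\cdot\phi^*_{H,\delta}$ (the remaining constants being absorbed into $D$ through the choices $m=4/(\varepsilon\delta)$ and $\beta=\varepsilon/4$). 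The main obstacle is this size-control step: the separator only guarantees expected size $\alpha$, so one must use property~2 together with the SDP constraint to argue that conditioning on $u\in S$ raises the expected size of $S$ to only $(1+O(\varepsilon))\delta n$, which is precisely what lets the truncation $\tilde S$ retain a non-negligible fraction of the mass of $S$ while respecting the size constraint.
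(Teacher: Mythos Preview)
Your approach is essentially the paper's: solve the SDP, sample a hypergraph $m$-orthogonal separator with $m=4/(\varepsilon\delta)$ and $\beta=\varepsilon/4$, truncate when $|S|>(1+\varepsilon)\delta n$, and argue that the truncated set has good expected size and expected cut. The paper black-boxes your second-moment size-control computation to a lemma of \citet{BFK}, which yields the per-vertex bound $\Pr(u\in S')\geq\tfrac{\alpha}{2}\|\bar u\|^2$ directly; your aggregate bound $\E{|\tilde S|}=\Omega(\varepsilon\alpha)$ is the same argument but loses a factor of $\varepsilon$, so as written it would give $O(D/\varepsilon)\cdot\phi^*_{H,\delta}$ rather than the stated $4D\cdot\phi^*_{H,\delta}$. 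The extra $1/\varepsilon$ is not ``absorbed into $D$'' by the choices of $m$ and $\beta$; to match the constant you need the sharper per-vertex bound.

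The more substantive gap is your last step. From $\E{|\tilde S|}\geq c\alpha$ and $\E{|\Ecut{\tilde S}|}\leq\alpha D\,\sdpcost$ alone you cannot conclude that some sample has $|\Ecut{\tilde S}|/|\tilde S|\leq (D/c)\,\sdpcost$: the two expectations could be realized on disjoint events (e.g., $|\tilde S|$ could be large only when $|\Ecut{\tilde S}|$ is also large). ``Summing over iterations and passing to the minimizer'' does not fix this. The paper couples the two quantities into a single random variable
\[
Z \;=\; |S'|-\frac{|\Ecut{S'}|}{4D^*\cdot\sdpcost},
\]
shows $\E{Z}\geq\alpha/4$, and then, since $Z\leq|S'|<n$ always, applies Markov to get $\Pr(Z>0)\geq\alpha/(4n)$. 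On the event $Z>0$ the set $S'$ is nonempty and satisfies $\phi(S')<4D^*\cdot\sdpcost\leq 4D^*\cdot\phi^*_{H,\delta}$; repeating $O(n/\alpha)$ times then succeeds with constant probability. Your outline needs exactly this coupling to be made rigorous.
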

\begin{proof}
We solve the SDP relaxation for H-SSE and obtain an SDP solution $\set{\bar u}$. Denote the SDP value by $\sdpcost$.
Consider a hypergraph orthogonal separator $S$
with $m = 4/(\varepsilon \delta)$ and $\beta = \varepsilon/4$.
Define a set $S'$: 
\[
S' = 
\begin{cases}
S,\ \text{if } |S| \leq (1+\varepsilon) \delta n,\\
\varnothing,\ \text{otherwise.}
\end{cases}
\]
Clearly, $|S'| \leq (1+\varepsilon) \delta n$.
\citet{BFK} showed that
$\Pr(u \in S') \in \bigl[\frac{\alpha}{2} \, \|\bar u\|^2,\alpha 
\|\bar u\|^2\bigr] $
for every $u\in V$ (see also Theorem A.1 in \citep{MM}).
Note that 
\[\Prob{S' \text{ cuts edge } e} \leq \Prob{S \text{ cuts edge } e}
\leq \alpha D^* \max_{u,v\in e} \|\bar u - \bar v\|^2.\]
where $D^*$ denotes $D(4/(\varepsilon \delta), \varepsilon/4)$ for the sake of brevity.
Let $Z = |S'| - \frac{|\Ecut{S'}|}{4D^* \cdot \sdpcost}$.
We have,
\begin{align*}\E{Z} &= \E{|S'|} - \frac{\E{ |\Ecut{S'}|} }{4D^* \cdot \sdpcost} \geq \sum_{u \in V} \frac{\alpha}{2} \cdot \|\bar u\|^2 -  
\frac{\sum_{e\in E} \alpha D^* \max_{u,v\in e}\|\bar u - \bar v\|^2}{4D^* \cdot \sdpcost}  \\
&= \frac{\alpha}{2} - \frac{1}{4D^* \cdot \sdpcost} \times \alpha D^* \sdpcost= \frac{\alpha}{4}.
\end{align*}
Since $Z \leq |S'| \leq (1+\varepsilon)\delta n < n$ (always), by Markov's inequality, we have $\Prob{Z > 0} \geq \alpha/(4n)$ and
hence
\[ \Prob{|\Ecut{S'}|/|S'| < 4D^* \cdot \sdpcost} \geq \alpha /(4n).  \]

We sample $S$ independently $4n/\alpha$ times and return the first set $S'$ such that 
$\frac{|\Ecut{S'}|}{|S'|} < 4D^* \cdot \sdpcost$.
This gives a set $S'$ such that $|S'| \leq (1+\varepsilon)\delta n$, and 
$\phi(S') \leq 4D^*  \phi^*_{H,\delta}$. The algorithm succeeds (finds such a set $S'$) with a constant probability. 
By repeating the algorithm $n$ times,
we can make the success probability exponentially close to $1$.
\end{proof}

In \prettyref{sec:orthogonal}, we describe how to generate an $m$-hypergraph orthogonal separator with distortion $D = O\bigl(\sqrt{\log n} \times \beta^{-1} m\log m\log\log m \bigr)$. That gives us an algorithm for H-SSE with approximation factor $O_{\varepsilon}\left(\delta^{-1} \log \delta^{-1} \log\log \delta^{-1}\times \sqrt{\log n}\right)$.

\section{Generating Hypergraph Orthogonal Separators}\label{sec:orthogonal}

In this section, we present an algorithm that generates a hypergraph $m$-orthogonal separator. At the high level, the algorithm is similar to the algorithm
for generating orthogonal separators from Section~4.3 in~\citep{CMM}.
We use a different procedure for generating words $W(u)$
(see below) and set parameters differently; also the analysis of our algorithm
is different.

In our algorithm, we use a ``normalization'' map $\varphi$ from~\citep{CMM}.
Map $\varphi$ maps a set $\set{\bar u}$ of vectors satisfying $\ell_2^2$--triangle inequalities 
\prettyref{eq:triangle-ineq1} and \prettyref{eq:triangle-ineq2} to $\mathbb{R}^n$. It has the following properties.
\begin{enumerate}
\item  
For all vertices $u$, $v$, $w$,
$
\|\varphi(\bar u) - \varphi(\bar v)\|_2^2 + \|\varphi(\bar v)
- \varphi(\bar w)\|_2^2 \geq \|\varphi(\bar u) - \varphi(\bar w)\|_2^2.
$
\item For all nonzero
vertices $u$ and $v$,
$\iprod{\varphi(\bar u)}{\varphi(\bar v)} = \frac{\iprod{\bar u}{\bar v}}{\max(\|\bar u\|^2,\|\bar v\|^2)}.$
\item In particular, for every $\bar u \neq 0$, $\|\varphi(\bar u)\|_2^2 = \iprod{\varphi(\bar u)}{\varphi(\bar u)}=  1$. Also,
$\varphi(0) = 0$.
\item For all non-zero vectors $\bar u$ and $\bar v$,
$
\|\varphi(\bar u) - \varphi(\bar v)\|_2^2 \leq \frac{2\,\|\bar u - \bar v \|^2}{\max(\|\bar u\|^2, \|\bar v\|^2)}.
$
\end{enumerate}
We also use the following theorem of~\citet*{ALN}  \citep*[see also][]{ARV}.
\begin{theorem}[Arora, Lee, and Naor (2005), Theorem 3.1] \label{thm:aln} 
There exist constants $C \geq 1$ and $p \in (0,    1/4)$
such that for every $n$ unit vectors $x_u$ ($u\in V$), satisfying 
$\ell_2^2$--triangle inequalities \prettyref{eq:triangle-ineq1}, and every $\Delta > 0$, the following holds. 
There exists a random subset $U $ of $V$ such that for every
$u, v \in V$ with $\|x_u - x_v \|^2 \geq \Delta$,
$
\Prob{u \in U \text{ and } d(v, U) \geq  \frac{\Delta}{C \sqrt{\log n}}} \geq p
$,
where $d(v,U) = \min_{u\in U} \|x_u - x_v\|^2$.
\end{theorem}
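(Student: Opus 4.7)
\medskip

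\noindent\textbf{Plan.} I would follow the $\ell_1$--framework for orthogonal separators from \citet*[Section~4.3]{CMM}, adapting it to the hypergraph setting. The steps are: (i)~normalize the SDP vectors via the map $\varphi$ to obtain unit vectors $x_u = \varphi(\bar u)$ which still satisfy $\ell_2^2$--triangle inequalities; (ii)~design a single-level random labeling $W_i : V \to \{0,1\}$ based on \prettyref{thm:aln}; (iii)~iterate the labeling $l = O(\log m)$ times independently to obtain a hash $W(u) \in \{0,1\}^l$; (iv)~sample a uniformly random target word $w^* \in \{0,1\}^l$ and a shared uniform threshold $g \in (0,1]$; (v)~output
\[
S = \bigl\{u \in V : W(u) = w^*\ \text{and}\ \|\bar u\|^2 \geq g\bigr\}.
\]

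\medskip

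\noindent\textbf{Single-level labeling.} At each level $i$, invoke \prettyref{thm:aln} with scale parameter $\Delta = \Theta(\beta)$ to sample a random subset $U_i \subseteq V$, independently draw a threshold $t_i$ uniformly in $[0, \Delta/(C\sqrt{\log n})]$, and set $W_i(u) = \mathbf{1}[d(x_u, U_i) \leq t_i]$. Two estimates drive the analysis. First, since $\ell_2^2$--distances satisfy the reverse triangle inequality $|d(x_u, U_i) - d(x_v, U_i)| \leq \|x_u - x_v\|^2$, thresholding a Lipschitz function of a single random variable $t_i$ gives the crucial \emph{hyperedge}-level bound
\[
\Prob{W_i \text{ cuts } e \,\big|\, U_i} \leq \frac{C\sqrt{\log n}}{\Delta}\,\max_{u,v\in e}\|x_u - x_v\|^2,
\]
with no union bound over $\binom{|e|}{2}$ pairs. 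Second, for $u,v$ with $\|x_u - x_v\|^2 \geq \Delta$, \prettyref{thm:aln} gives $\Prob{W_i(u) \neq W_i(v)} \geq p$ for a universal constant $p \in (0,1/4)$.

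\medskip

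\noindent\textbf{Aggregation and norm coupling.} Choose $l = \lceil\log_{1/(1-p)} m\rceil$. Combining $l$ independent levels yields $\Prob{W(u) = W(v)} \leq (1-p)^l \leq 1/m$ whenever $\|x_u - x_v\|^2 \geq \Delta$. Property~(4) of $\varphi$, together with the SDP inequality $\iprod{\bar u}{\bar v} \leq \min(\|\bar u\|^2, \|\bar v\|^2)$ from \prettyref{eq:triangle-ineq2}, lets me pick $\Delta = \Theta(\beta)$ so that Property~2's hypothesis $\|\bar u - \bar v\|^2 \geq \beta \min(\|\bar u\|^2, \|\bar v\|^2)$ implies $\|x_u - x_v\|^2 \geq \Delta$. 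The shared threshold $g$ satisfies $\Prob{\|\bar u\|^2 \geq g} = \|\bar u\|^2$, which gives Property~1 with $\alpha = 2^{-l}$. Independence of $w^*$ from $(W,g)$ then yields Property~2:
\[
\Prob{u,v \in S} = 2^{-l}\Prob{W(u)=W(v)}\,\min(\|\bar u\|^2,\|\bar v\|^2) \leq \alpha\,\min(\|\bar u\|^2, \|\bar v\|^2)/m.
\]
Property~3 follows by a union bound on the events ``some $W_i$ cuts $e$'' and ``$W$ agrees on $e$ but $g$ splits $e$''; the latter is bounded by $\max_{u,v\in e}\|\bar u-\bar v\|^2$ using $|\,\|\bar u\|^2 - \|\bar v\|^2\,| \leq \|\bar u-\bar v\|^2$, a direct consequence of \prettyref{eq:triangle-ineq2}.

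\medskip

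\noindent\textbf{Main obstacle.} The delicate step is converting the per-level cut estimate into the Property~3 bound $\alpha D\,\max_{u,v\in e}\|\bar u-\bar v\|^2$ \emph{uniformly} over hyperedges whose vertices span widely different norms. Property~(4) of $\varphi$ only gives $\|x_u - x_v\|^2 \leq 2\|\bar u - \bar v\|^2/\max(\|\bar u\|^2,\|\bar v\|^2)$, so the single-level estimate is inflated by $1/\max_{w\in e}\|\bar w\|^2$, a factor absent from the target bound. To eliminate it, I would run the single-level procedure at $O(\log\log m)$ dyadic scales of the vertex norms and argue that each hyperedge is charged only to the scale matching $\max_{w\in e}\|\bar w\|^2$. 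This union bound over scales is the source of the $\log\log m$ factor, while combining $l = O(\log m)$ levels with the normalization $\alpha = 2^{-l}$ accounts for the remaining $m\log m$ factor in $D = O(\beta^{-1} m\log m\log\log m\sqrt{\log n})$.
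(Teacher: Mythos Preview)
Your proposal is not a proof of the stated theorem. \prettyref{thm:aln} is the Arora--Lee--Naor structure theorem, quoted in the paper as an external black box and not proved there; your write-up \emph{invokes} \prettyref{thm:aln} rather than establishing it. What you have actually sketched is the construction of hypergraph $m$-orthogonal separators, i.e.\ a proof of \prettyref{thm:hyper-orth-sep}.

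Read as a sketch of \prettyref{thm:hyper-orth-sep}, your plan has two substantive problems relative to the paper. First, without an amplification step your parameters do not close. With single-level separation probability only $p<1/4$, you need $l=\lceil\log_{1/(1-p)} m\rceil$ levels to force $\Prob{W(u)=W(v)}\le 1/m$, so $\alpha=2^{-l}=m^{-c}$ with $c=1/\log_2\bigl(1/(1-p)\bigr)>1$. The distortion then becomes $D=\Theta(l\cdot\beta^{-1}\sqrt{\log n})/\alpha=\Theta(\beta^{-1} m^{c}\log m\,\sqrt{\log n})$, strictly worse than the target $O(\beta^{-1} m\log m\log\log m\,\sqrt{\log n})$. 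The paper avoids this by first XOR-amplifying $K=O(\log\log m)$ independent copies of the basic assignment (\prettyref{lem:omega-amplified}) to boost the per-level separation probability to $\tfrac12-\tfrac{1}{\log_2 m}$; then $l=\log_2 m+O(1)$ suffices, $\alpha=\Theta(1/m)$, and the extra factor $K$ in the per-level cut bound is precisely the $\log\log m$ in the final distortion.

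Second, the ``main obstacle'' you identify and your proposed multi-scale fix are not needed, and the attribution of the $\log\log m$ factor to dyadic norm scales is incorrect. The paper handles the norm mismatch directly: on the event ${\cal E}_2$ that every vertex of $e$ passes the length threshold, that event has probability $\rho_m=\min_{w\in e}\|\bar w\|^2$, while property~4 of $\varphi$ inflates the $\varphi$-space cut bound by at most $2/\rho_m$; these factors cancel. No multi-scale decomposition appears anywhere in the argument.
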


First we describe an algorithm that randomly assigns each vertex $u$ a symbol,
either 0 or 1. Then we use this algorithm to generate an orthogonal separator.
\begin{lemma}\label{lem:omega}
There is a randomized polynomial-time algorithm that given a finite set $V$, unit vectors $ \varphi(\bar u)$ for $u\in V$ 
satisfying $\ell_2^2$-triangle inequalities
and a parameter $\beta\in(0,1)$, returns a random assignment 
$\omega:V \to \set{0,1}$  that satisfies the following properties.
\begin{itemize}
\item  For every $u$ and $v$ such that $\| \varphi(\bar u) - \varphi(\bar v) \|^2 \geq \beta$,
\[
\Prob{\omega(u)\neq \omega(v)} \geq 2p,
\]
where $p > 0$ is the constant from \prettyref{thm:aln}.
\item For every set $e\subset V$ of size at least 2, 
\[
\Prob{\omega(u)\neq \omega(v) \text{ for some } u,v\in e} \leq O(\beta^{-1}  \sqrt{\log n}\max_{u,v\in e} \|\varphi(\bar u) - \varphi(\bar v)\|^2).
\]
\end{itemize}
\end{lemma}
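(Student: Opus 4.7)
The plan is to combine ALN's theorem with a random threshold on the squared distance to the resulting set. Concretely, I apply \prettyref{thm:aln} at scale $\Delta=\beta$ to the unit vectors $\{\varphi(\bar u)\}_{u\in V}$ to obtain a random set $U\subset V$. Independently of $U$, I sample a threshold $t$ uniformly from $[0,R]$ with $R:=\beta/(C\sqrt{\log n})$, and set
\[
\omega(u)\;=\;\mathbf{1}\bigl[d(\varphi(\bar u),U)\le t\bigr],\qquad d(x,U):=\min_{u'\in U}\|\varphi(\bar{u}')-x\|^2.
\]

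For the separation property, I invoke \prettyref{thm:aln} in both orderings of a pair $u,v$ with $\|\varphi(\bar u)-\varphi(\bar v)\|^2\ge\beta$. The two events $\{u\in U,\ d(\varphi(\bar v),U)\ge R\}$ and $\{v\in U,\ d(\varphi(\bar u),U)\ge R\}$ are disjoint (each forces a different one of $u,v$ into $U$) and each has probability at least $p$. In either event one of the two distances is $0$ and the other is at least $R$, so every $t\in[0,R]$ separates the two labels; this gives $\Pr[\omega(u)\ne\omega(v)]\ge 2p$.

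For the cut property, I condition on $U$: the hyperedge $e$ is cut by $\omega$ iff $t$ lies in the interval $[\min_{u\in e}d(\varphi(\bar u),U),\max_{v\in e}d(\varphi(\bar v),U))$, so $\Pr[\text{cut}\mid U]\le(\max_e d-\min_e d)/R$. Taking expectation over $U$ and reducing to a maximising pair $u^\star,v^\star\in e$, the target bound reduces to the pairwise estimate $\mathbb{E}_U\bigl[|d(\varphi(\bar u^\star),U)-d(\varphi(\bar v^\star),U)|\bigr]=O(\|\varphi(\bar u^\star)-\varphi(\bar v^\star)\|^2)$, which together with the $1/R$ factor yields the desired $O(\beta^{-1}\sqrt{\log n}\,\|\varphi(\bar u^\star)-\varphi(\bar v^\star)\|^2)$. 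The main obstacle is exactly this quadratic-in-displacement bound on the expected variation of $d(\cdot,U)$: the bare Lipschitz inequality $|d(x,U)-d(y,U)|\le 2\|x-y\|$ (from $d(x,U)=2-2\max_{z\in U}\langle x,z\rangle$ and the 1-Lipschitzness of the max) is only linear in $\|\varphi(\bar u^\star)-\varphi(\bar v^\star)\|$, which is too weak when the vertices of $e$ are mutually much closer than $\sqrt R$. I plan to close this gap by exploiting the ``padding'' structure of the ALN random set: pairs $\ell_2^2$-close compared to the scale $R$ straddle any given level set $\{d(\cdot,U)=\tau\}$ with probability only $O(\|\varphi(\bar u^\star)-\varphi(\bar v^\star)\|^2/R)$, and averaging this straddling probability over $\tau\in[0,R]$ supplies the extra factor of $\|\varphi(\bar u^\star)-\varphi(\bar v^\star)\|$ needed to upgrade the bound from linear to quadratic.
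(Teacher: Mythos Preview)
Your construction is exactly the paper's: apply \prettyref{thm:aln} with $\Delta=\beta$, sample $t$ uniformly in $[0,R]$ with $R=\beta/(C\sqrt{\log n})$, and threshold $d(\cdot,U)$ at $t$. Your treatment of the separation bullet is fine and matches the paper.

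The gap is in the cut bullet, and it is a missed one-line observation rather than a genuine obstacle. You correctly reduce to bounding $\tau_M-\tau_m=\max_{w\in e}d(\varphi(\bar w),U)-\min_{w\in e}d(\varphi(\bar w),U)$, and then write that the ``bare Lipschitz inequality'' only gives $|d(x,U)-d(y,U)|\le 2\|x-y\|$, which is linear and too weak. But you are forgetting a hypothesis of the lemma: the vectors $\varphi(\bar u)$ satisfy the $\ell_2^2$-triangle inequalities. For any $z\in U$ and any $u,v$,
\[
\|\varphi(\bar u)-z\|^2 \;\le\; \|\varphi(\bar u)-\varphi(\bar v)\|^2 + \|\varphi(\bar v)-z\|^2,
\]
so taking the minimum over $z\in U$ gives $d(\varphi(\bar u),U)\le \|\varphi(\bar u)-\varphi(\bar v)\|^2 + d(\varphi(\bar v),U)$. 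Hence $w\mapsto d(\varphi(\bar w),U)$ is $1$-Lipschitz \emph{with respect to the squared metric}, and therefore
\[
\tau_M-\tau_m \;\le\; \max_{u,v\in e}\|\varphi(\bar u)-\varphi(\bar v)\|^2
\]
deterministically, for every realization of $U$. Dividing by $R$ gives the claimed $O(\beta^{-1}\sqrt{\log n}\max_{u,v\in e}\|\varphi(\bar u)-\varphi(\bar v)\|^2)$ immediately, with no expectation over $U$ and no padding argument needed.

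Your proposed workaround via ``padding'' of the ALN set is both unnecessary and unsupported: \prettyref{thm:aln} as stated gives only a separation guarantee and says nothing about the probability that a close pair straddles a level set of $d(\cdot,U)$, so that line of argument would require additional properties of $U$ that are not provided.
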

\begin{proof}
Let $U$ be the random set from \prettyref{thm:aln} for vectors $x_u = \varphi(\bar u)$ and $\Delta = \beta$. 
Choose $t\in (0,1/(C\sqrt{\log n}))$ uniformly at random. 
Let 
\[
\omega(u) = \begin{cases}
0, \text{ if } d(U_i, u) \leq t, \\
1, \text{ otherwise}.
\end{cases}
\]

Consider first vertices $u$ and $v$ such that $\| \varphi(\bar u) - \varphi(\bar v) \|^2 \geq \beta$.
By \prettyref{thm:aln},
\begin{equation*}
\Prob{u\in U \text{ and } d(v,U) \geq \frac{\Delta}{C\sqrt{\log n}}} \geq p \quad \text{and}\quad 
\Prob{v\in U \text{ and } d(u,U) \geq \frac{\Delta}{C\sqrt{\log n}}} \geq p.
\end{equation*}
Note that in the former case, when $u\in U$ and $d(v,U) \geq \frac{\Delta}{C\sqrt{\log n}}$, we have $\omega(u) = 0$ and $\omega(v) = 1$; 
in the latter case, when $v\in U$ and $d(u,U) \geq \frac{\Delta}{C\sqrt{\log n}}$, we have $\omega(v) = 0$ and $\omega(u) = 1$.
Therefore, the probability that $\omega(u) \neq \omega(v)$ is at least $2p$.

Now consider a set $e\subset V$ of size at least 2.
Let 
\[
\tau_m = \min_{w\in e} d(U,\varphi(\bar w)) \quad \text{and}\quad 
\tau_M = \max_{w\in e} d(U,\varphi(\bar w)).\]
 We have,
$
\tau_M -\tau_m \leq \max_{u, v \in e} \|\varphi(\bar u)  - \varphi(\bar v)\|^2 .
$
Note that if $t < \tau_m$ then $\omega(u) = 1$ for all $u\in e$;
if  $t \geq \tau_M$ then $\omega(u) = 0$ for all $u\in e$. Thus $\omega(u) \neq \omega(v)$ for some $u,v\in e$ only if $t \in [\tau_m,\tau_M)$.
Since the probability density of the random variable $t$ is at most $C\sqrt{\log n}$, we get, 
\begin{align*}
\Prob{\exists\, u,v\in e: \omega(u) \neq \omega(v)} &\leq \Prob{t \in [\tau_m,\tau_M)} \leq \frac{C\sqrt{\log n}}{\Delta} \,(\tau_M-\tau_m) 
\leq 
\frac{C\sqrt{\log n}}{\beta}  \max_{u, v \in e} \|\bar u - \bar v \|^2.
\end{align*}
\end{proof}
We now amplify the result of \prettyref{lem:omega}.
\begin{lemma}\label{lem:omega-amplified}
There is a randomized polynomial time algorithm that given $V$, vectors $\varphi(\bar u)$ and $\beta\in(0,1)$ as in \prettyref{lem:omega},
and a parameter $m \geq 2$,
returns a random assignment 
$\omega:V \to \set{0,1}$  such that: 
\begin{itemize}
\item  For every $u$ and $v$ such that $\| \varphi(\bar u) - \varphi(\bar v) \|^2 \geq \beta$,
\[
\Prob{\tilde\omega(u)\neq \tilde\omega(v)} \geq \frac12 - \frac{1}{\log_2 m}.
\]
\item For every set $e\subset V$ of size at least 2, 
\[
\Prob{\tilde\omega(u)\neq \tilde\omega(v) \text{ for some } u,v\in e} \leq O(\beta^{-1}  \sqrt{\log n}\cdot \log \log m\cdot \max_{u,v\in e} \|\varphi(\bar u) - \varphi(\bar v)\|^2).
\]
\end{itemize}
\end{lemma}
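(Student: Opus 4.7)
The plan is parallel repetition followed by a bit-wise XOR. I would invoke the procedure from \prettyref{lem:omega} independently $k = \Theta(\log\log m)$ times to obtain assignments $\omega_1,\ldots,\omega_k : V \to \{0,1\}$, and then set
\[
\tilde\omega(u) \;=\; \omega_1(u) \oplus \omega_2(u) \oplus \cdots \oplus \omega_k(u),
\]
where $\oplus$ denotes addition mod $2$.

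For the separation guarantee, fix $u,v$ with $\|\varphi(\bar u) - \varphi(\bar v)\|^2 \geq \beta$. The indicator variables $X_i = \omega_i(u) \oplus \omega_i(v)$ are independent Bernoullis with means $q_i \geq 2p$ by \prettyref{lem:omega}, and the standard XOR-of-Bernoullis identity gives
\[
\Prob{\tilde\omega(u) \neq \tilde\omega(v)} \;=\; \frac{1 - \prod_{i=1}^k (1-2q_i)}{2} \;\geq\; \frac{1 - (1-4p)^k}{2}.
\]
Since $p$ is an absolute constant in $(0,1/4)$, the choice $k = \lceil \log(\log_2 m / 2)/\log(1/(1-4p))\rceil = \Theta(\log\log m)$ makes $(1-4p)^k \leq 2/\log_2 m$, yielding the desired $1/2 - 1/\log_2 m$ lower bound.

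For the hyperedge cut guarantee, observe that if every $\omega_i$ is constant on $e$, then $\tilde\omega$ is the XOR of those constants and hence also constant on $e$; contrapositively, $\tilde\omega$ being non-constant on $e$ forces some $\omega_i$ to be non-constant on $e$. A union bound over the $k$ rounds together with the second guarantee of \prettyref{lem:omega} gives
\[
\Prob{\exists\, u,v\in e:\ \tilde\omega(u)\neq\tilde\omega(v)} \;\leq\; \sum_{i=1}^k \Prob{\omega_i \text{ is not constant on } e} \;\leq\; O\!\left(k\, \beta^{-1} \sqrt{\log n}\right) \max_{u,v\in e} \|\varphi(\bar u)-\varphi(\bar v)\|^2,
\]
and plugging in $k = \Theta(\log\log m)$ matches the claimed bound. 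There is no real obstacle in the argument; the only thing to verify carefully is the XOR-bias identity, which follows by an easy induction on $k$ from the substitution $r_i = 1 - 2q_i$.
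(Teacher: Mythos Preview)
Your proposal is correct and essentially identical to the paper's proof: the paper also takes $K=\Theta(\log\log m)$ independent copies of the assignment from \prettyref{lem:omega}, XORs them, uses the same closed-form identity $\Prob{\tilde\omega(u)\neq\tilde\omega(v)}=\tfrac{1-(1-2\tilde p)^K}{2}$ for the separation bound, and a union bound over the $K$ rounds for the hyperedge bound. The only cosmetic differences are that the paper writes a single $\tilde p$ (the rounds are i.i.d.) and takes $K=\max\bigl(\lceil \log_2\log_2 m / (-\log_2(1-4p))\rceil,1\bigr)$ to cover small $m$.
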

\vspace{-0.4cm}
We independently sample $K = \max\left(\left\lceil \frac{\log_2 \log_2 m}{-\log_2(1-4p)}\right\rceil,1\right)$ assignments $\omega_1,\dots,
\omega_K$, and let $\tilde\omega(u) = \omega_1(u) \oplus \dots \oplus \omega_K(u)$ (where $\oplus$ denotes addition modulo 2). It is easy to see that
the assignment $\tilde \omega$ satisfies the required properties. We give the proof in \prettyref{sec:proof-omega-amplified}.

We are now ready to present our algorithm.

\begin{enumerate}
\item Set $l = \lceil \log_2 m / (1 - \log_2(1+2/\log_2 m)) \rceil = \log_2 m + O(1)$.
\item Sample $l$ independent assignments 
$\tilde\omega_1, \dots, \tilde\omega_l$ using \prettyref{lem:omega-amplified}. 
\item For every vertex $u$, define word $W(u) = \tilde\omega_1(u) \dots \tilde\omega_l(u)\in \set{0,1}^l$.
\item If $n \geq 2^l$, pick a word $W \in \set{0,1}^l$ uniformly at random. If $n < 2^l$, pick a random word $W \in \set{0,1}^l$ 
so that $\Pr_W(W = W(u)) = 1/n$ for every  $u\in V$. This is possible since the number of distinct words constructed in step 3 is at
most $n$ (we may pick a word $W$ not equal to any $W(u)$).

\item Pick $r\in (0,1)$ uniformly at random.

\item Let $S = \set{u \in V: \|\bar u\|^2 \geq r \text{ and } W(u) = W}$.
\end{enumerate}

\begin{theorem}\label{thm:h-ort-sep-distortion-bound}
Random set $S$ is a hypergraph $m$-orthogonal separator with distortion
\[
D = O\bigl(\sqrt{\log n} \times \frac{m\log m\log\log m}{\beta} \bigr),
\]
probability scale $\alpha \geq 1/n$ and separation threshold $\beta$.
\end{theorem}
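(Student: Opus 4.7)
The plan is to verify in turn the three defining properties of a hypergraph $m$-orthogonal separator. Property~1 is immediate: the event $\{u \in S\}$ is the conjunction of the independent events $\{\|\bar u\|^2 \geq r\}$ (of probability $\|\bar u\|^2$) and $\{W = W(u)\}$ (of probability $\alpha$, by the construction in step~4), so $\Pr(u \in S) = \alpha\|\bar u\|^2$. The bound $\alpha \geq 1/n$ follows from the explicit case analysis on whether $n \geq 2^l$ (when $n \geq 2^l$, $\alpha = 1/2^l \geq 1/n$; otherwise $\alpha = 1/n$ by construction).

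For Property~2, assume WLOG $\|\bar u\|^2 \leq \|\bar v\|^2$ (the case $\bar u = 0$ is trivial). The hypothesis $\|\bar u - \bar v\|^2 \geq \beta\min(\|\bar u\|^2, \|\bar v\|^2)$ must be translated into a lower bound on $\|\varphi(\bar u) - \varphi(\bar v)\|^2$. Combining the identity $\|\varphi(\bar u) - \varphi(\bar v)\|^2 = 2 - 2\langle\bar u,\bar v\rangle/\|\bar v\|^2$ with the SDP constraint $\langle\bar u, \bar v\rangle \leq \|\bar u\|^2$ (from \prettyref{eq:triangle-ineq2}), a brief case analysis on whether $\|\bar v\|^2 \geq 2\|\bar u\|^2$ yields $\|\varphi(\bar u) - \varphi(\bar v)\|^2 \geq \beta/2$. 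Then independence of the $\tilde\omega_i$'s and \prettyref{lem:omega-amplified} (used with parameter $\beta/2$) give $\Pr(W(u) = W(v)) = \prod_i \Pr(\tilde\omega_i(u) = \tilde\omega_i(v)) \leq (\tfrac12 + \tfrac{1}{\log_2 m})^l \leq 1/m$ by the very choice of $l$ in step~1. Since $W$ is independent of the $\tilde\omega_i$'s and of $r$, $\Pr(u \in S \text{ and } v \in S) = \|\bar u\|^2 \cdot \alpha \cdot \Pr(W(u) = W(v)) \leq \alpha\|\bar u\|^2/m$.

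For Property~3, let $M = \max_{u,v \in e}\|\bar u - \bar v\|^2$, $N = \max_{u \in e}\|\bar u\|^2$, $K_e = |\{W(u) : u \in e\}|$, and $U_r = \{u \in e : \|\bar u\|^2 \geq r\}$. I split according to whether $K_e = 1$ or $K_e \geq 2$. If $K_e = 1$, then $e$ is cut only if $W$ equals the common word of the $W(u)$'s and $r$ separates the norms, contributing at most $\alpha \cdot (\max_u\|\bar u\|^2 - \min_u\|\bar u\|^2) \leq \alpha M$ (the last inequality uses the SDP-derived bound $|\|\bar u\|^2 - \|\bar v\|^2| \leq \|\bar u - \bar v\|^2$). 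If $K_e \geq 2$, cutting requires some $u \in e$ to land in $S$; conditioning on $\tilde\omega$ and integrating over $r$ gives
$$\Pr(\exists u \in e: u \in S \mid \tilde\omega) = \alpha\int_0^1 K_{U_r}\,dr = \alpha \sum_w \max_{u: W(u) = w}\|\bar u\|^2 \leq \alpha K_e N \leq O(\alpha m N),$$
since $K_e \leq 2^l = O(m)$ by the choice of $l$. Multiplying by $\Pr(K_e \geq 2) \leq l \cdot O(\beta^{-1}\sqrt{\log n}\,\log\log m \cdot M')$ from the amplified lemma (applied to the edge $e$), and invoking the geometric inequality $N M' = O(M)$ below, one obtains $\Pr(\text{cut}, K_e \geq 2) \leq O(\alpha D M)$ with $D = O(m \log m \log\log m \sqrt{\log n}/\beta)$.

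The main obstacle is the geometric inequality $N \cdot M' = O(M)$, where $M' = \max_{u,v \in e}\|\varphi(\bar u) - \varphi(\bar v)\|^2$. I would prove it by dichotomy. If $N \leq 2M$, the trivial bound $M' \leq 2$ already gives $N M' \leq 4M$. Otherwise $N > 2M$, and the SDP constraint applied to the pair $(u^{\max}, v)$, where $u^{\max}$ achieves the maximum norm, yields $\|\bar u^{\max} - \bar v\|^2 \geq N - \|\bar v\|^2$; combined with $\|\bar u^{\max} - \bar v\|^2 \leq M$, this forces every $\|\bar v\|^2 \geq N - M > N/2$ for $v \in e$. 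Hence every pair in $e$ has $\max(\|\bar u\|^2, \|\bar v\|^2) \geq N/2$, so property~4 of $\varphi$ gives $M' \leq 2M/(N/2) = 4M/N$ and again $N M' \leq 4M$. A secondary technicality is verifying $2^l = O(m)$ from the ceiling in step~1's formula for $l$, which amounts to bounding $\log_2(1 + 2/\log_2 m)$ and tracking constants.
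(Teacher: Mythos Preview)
Your proof is correct. Properties~1 and~2 essentially match the paper's, with one wrinkle: the paper obtains $\|\varphi(\bar u)-\varphi(\bar v)\|^2\geq\beta$ in one line (from $2\langle\bar u,\bar v\rangle \leq (1-\beta)\|\bar u\|^2 + \|\bar v\|^2 \leq (2-\beta)\|\bar v\|^2$) with no case split. Your stated bound $\geq\beta/2$ and your appeal to ``\prettyref{lem:omega-amplified} with parameter $\beta/2$'' is a slip --- the $\tilde\omega_i$ were already generated with parameter $\beta$, so you cannot change the parameter after the fact; fortunately your own case analysis actually yields $\geq\beta$, so the lemma applies as stated.

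For Property~3 you take a genuinely different decomposition. The paper splits by whether the radius threshold captures all of $e$: event $\mathcal{E}_1=\{A\neq e,\ S\cap e\neq\varnothing\}$ contributes at most $\rho_M-\rho_m\leq M$, while event $\mathcal{E}_2=\{A=e,\ \varnothing\neq S\cap e\neq e\}$ forces $r\leq\rho_m$ together with some word disagreement; the factor $\Pr(r\leq\rho_m)=\rho_m$ then cancels exactly against the $\rho_m^{-1}$ that appears when property~4 of $\varphi$ is applied with $\max(\|\bar u\|^2,\|\bar v\|^2)\geq\rho_m$, leaving $O(\beta^{-1}\sqrt{\log n}\,l\,\log\log m)\cdot M$ directly. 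You instead split by whether all words in $e$ agree ($K_e=1$ versus $K_e\geq 2$); in the latter case you pick up an extra factor $K_e\leq 2^l=O(m)$ and the \emph{maximum} norm $N$, and then neutralize them via the separate geometric inequality $N\cdot M'\leq 4M$. Both routes reach the same distortion; the paper's is shorter and avoids your dichotomy lemma, while yours makes the $2^l\approx m$ contribution to the distortion more visibly combinatorial.
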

\begin{proof}
We verify that $S$ satisfies properties 1--3 in the definition of a hypergraph $m$-orthogonal separator with $\alpha = \max (1/2^l, 1/n)$.

\smallskip
\noindent\textbf{Property 1.}
We compute the probability that $u \in S$. Observe that $u\in S$ if and only if $W(u) = W$ and $r\leq \|\bar u\|^2$ (these two events are independent).
If $n \geq 2^l$, the probability that $W = W(u)$ is $1/2^l$
since we choose $W$ uniformly at random from $\set{0,1}^l$;
if $n < 2^l$ the probability is $1/n$. That is,
$\Prob{W = W(u)} = \max(1/2^l, 1/n) = \alpha$.
The probability that $r\leq \|\bar u\|^2$ is  $\|\bar u\|^2$. We conclude that property 1 holds. 

\medskip
\noindent\textbf{Property 2.} Consider two vertices $u$ and $v$ such that
$\| \bar u - \bar v \|^2 \geq \beta \min (\|\bar u\|^2, \|\bar v\|^2)$.
Assume without loss of generality that $\|\bar u\|^2 \leq \|\bar v\|^2$.
Note that $u,v\in S$ if and only if $r \leq \|\bar u\|^2$ and
$W = W(u) = W(v)$.
We first upper bound the probability that $W(u) = W(v)$. We have,
$
2\iprod{\bar u}{\bar v} = \|\bar u\|^2 + \|\bar v\|^2  - \|\bar u - \bar v\|^2
\leq  (1-\beta) \|\bar u\|^2 + \|\bar v\|^2 \leq  
(2-\beta) \|\bar v\|^2.
$ 
Therefore,
$2\iprod{\bar u}{\bar v}/\|\bar v\|^2 \leq 2-\beta$.
Hence,
$
\|\varphi(\bar u) - \varphi(\bar v)\|^2 = 2 - 2\iprod{\varphi(\bar u)}{\varphi(\bar v)}=
2 - \frac{2\iprod{\bar u}{\bar v}}{\max(\|\bar u\|^2,\|\bar v\|^2)}
\geq\beta =\Delta.
$
From \prettyref{lem:omega-amplified} we get that 
$\Prob{\tilde\omega_i(u)\neq \tilde\omega_i(v)} \geq \frac12 - \frac{1}{\log_2 m}$ for every $i$.
The probability that $W(u) = W(v)$ is at most
$(\frac12 + \frac{1}{\log_2 m})^l\leq 1/m$. 
We have,
\begin{align*}
\Prob{u\in S, v\in S} &= \Prob{r \leq \min (\|\bar u\|^2, \|\bar v\|^2)}
\times \Prob{W= W(u) = W(v) \given W(u) = W(v)} \times \\
& \phantom{{}={}\times{}}\Prob{W(u) = W(v)}
\leq \min (\|\bar u\|^2, \|\bar v\|^2) \times \alpha \times (1/m),
\end{align*}
as required.

\smallskip
\noindent\textbf{Property 3.}
Let $e$ be an arbitrary subset of $V$, $|e| \geq 2$. Let $\rho_m = \min_{w\in e} \|\bar w\|^2$ and $\rho_M = \max_{w\in e} \|\bar w\|^2$.
Note that
 \[
    \rho_M-\rho_m = \|\bar w_1\|^2  - \|\bar w_2\|^2 \leq \|\bar w_1  -\bar w_2\|^2 \leq \max_{u, v \in e} \|\bar u  -\bar v\|^2,
 \]
for some $w_1,w_2\in e$. Here we used that  SDP constraint~\prettyref{eq:triangle-ineq2} implies that $ \|\bar w_1\|^2  - \|\bar w_2\|^2 \leq \|\bar w_1  -\bar w_2\|^2$.

Let $A = \set{u \in e: \|\bar u \|^2 \geq r}$. Note that $S\cap e = \set{u \in A: W(u) = W}$. Therefore, if $e$ is cut by $S$ then one of the following
events happens.
\begin{itemize}
\item Event ${\cal E}_1$: $A \neq e$ and $S\cap e \neq \varnothing$.
\item Event ${\cal E}_2$: $A = e$ and $A \cap S \neq \varnothing$,   $A \cap S \neq A$.
 \end{itemize}
If ${\cal E}_1$ happens then $r \in [\rho_m, \rho_M]$ since $A \neq e$ and $A\neq \varnothing$. We have, 
\[
\Prob{{\cal E}_1} \leq \Prob{r\in (\rho_m,\rho_M]} \leq |\rho_M - \rho_m|
\leq \max_{u, v \in e} \|\bar u  -\bar v\|^2.
\]
If ${\cal E}_2$ happens then (1) $r \leq \rho_m $ (since $A = e$) and (2) $W(u) \neq W(v)$ for some $u,v\in e$.
The probability that $r \leq \rho_m$ is $\rho_m$. We now upper bound the probability that  $W(u) \neq W(v)$ for some  $u,v\in e$.
For each $i\in\set{1,\dots,l}$,
\begin{align*}
\Prob{\tilde\omega_i(u)\neq \tilde\omega_i(v) \text{ for some } u,v\in e} &\leq
O(\beta^{-1}  \sqrt{\log n}\cdot \log\log m) \max_{u, v \in e}  \|\varphi(\bar u) - \varphi(\bar v)\|^2 \\ 
&\leq  
O(\beta^{-1}  \sqrt{\log n}\cdot \log\log m) \max_{u, v \in e} \frac{2\|\bar u - \bar v \|^2}{\min(\|\bar u\|^2, \|\bar v\|^2)}\\
&\leq  O(\beta^{-1}  \sqrt{\log n}\cdot \log\log m)
\times \rho_m^{-1} \times  \max_{u, v \in e} \|\bar u - \bar v \|^2.
\end{align*}

By the union bound over $i\in\set{1,\dots,l}$, the probability that $W(u) \neq W(v)$ for some $u,v\in e$ is at most 
$O(l\times \beta^{-1}  \sqrt{\log n}\cdot \log\log m)
\times \rho_m^{-1} \times  \max_{u, v \in e} \|\bar u - \bar v \|^2$.
Therefore,
\begin{align*}
\Prob{{\cal E}_2} &\leq \rho_m \times O(l\times \beta^{-1}  \sqrt{\log n}\, \log\log m)
\times \rho_m^{-1} \times  \max_{u, v \in e} \|\bar u - \bar v \|^2\\
&\leq 
O(\beta^{-1}  \sqrt{\log n}\, \log m \log\log m)
\times  \max_{u, v \in e} \|\bar u - \bar v \|^2.
\end{align*}
We get that the probability that $e$ is cut by $S$ is at most
\[
\Prob{{\cal E}_1} + \Prob{{\cal E}_2} \leq O(\beta^{-1}  \sqrt{\log n}\, \log m \log\log m)
\times  \max_{u, v \in e} \|\bar u - \bar v \|^2.
\]
For 
$D = O(\beta^{-1}  \sqrt{\log n}\, \log m \log\log m)   /\alpha $
we get 
\[
\Prob{e \text{ is cut by } S} \leq \alpha D \max_{u, v \in e} \|\bar u - \bar v \|^2.
\]
Note that $\alpha \geq 1/2^l \geq \Omega(1/m)$. Thus
\[
D
\leq O(\beta^{-1}  \sqrt{\log n}\, m \log m \log\log m).
\]
\end{proof}

\newpage 
\appendix

\section{\texorpdfstring{$\ell_2$--$\ell_2^2$}{Euclidean} Hypergraph Orthogonal Separators}
\label{sec:orthogonal-l2-l22}
In this section, we present another variant of hypergraph orthogonal separators,
which we call $\ell_2$--$\ell_2^2$ hypergraph orthogonal separators.
The advantage of $\ell_2$--$\ell_2^2$ hypergraph orthogonal separators is
that their distortions do not depend on $n$ (the number of vertices).
Then in \prettyref{sec:algoHSSE-2}, we use $\ell_2$--$\ell_2^2 $
hypergraph orthogonal separators 
to prove \prettyref{thm:hsse-3} (which, in turn, 
implies \prettyref{thm:hsse-2}).
\begin{definition}[$\ell_2$--$\ell_2^2$ Hypergraph Orthogonal Separator]
Let $\set{\bar u:u\in V}$ be a set of vectors in the unit ball. We say that a random set $S\subset V$ is  \textit{a $\ell_2$--$\ell_2^2$ hypergraph $m$-orthogonal separator}
with $\ell_2$--distortion $D_{\ell_2}:{\mathbb N} \to {\mathbb R}$,
$\ell_2^2$--distortion $D_{\ell_2^2}$, probability scale $\alpha > 0$, and separation threshold $\beta \in (0,1)$ if it satisfies the following properties.
\begin{enumerate}
\item For every $u\in V$, 
\[\Pr(u\in S) = \alpha \|\bar u\|^2.\]
\item For every $u$ and $v$ such that $\| \bar u - \bar v \|^2 \geq \beta \min(\|\bar u\|^2, \|\bar v\|^2)$
\[ 
\Prob{u\in S \text{ and } v\in S} \leq \alpha \frac{\min (\|\bar u\|^2, \|\bar v\|^2)}{m}.
\]
\item For every $e\subset V$,
\[
\Prob{e \text{ is cut by } S} \leq \alpha D_{\ell_2^2} \cdot \max_{u, v \in e} \|\bar u - \bar v \|^2 + \alpha D_{\ell_2}(|e|) \cdot \min_{w\in e}{\|bar w\|} \cdot \max_{u, v \in e} \|\bar u - \bar v \|.
\]
\end{enumerate}
(This definition differs from \prettyref{def:h-othogonal-sep} only in item 3.)
\end{definition}
\begin{theorem}
\label{thm:hyper-orth-sep-l2}
There is a polynomial-time randomized algorithm that given a set of vertices $V$, a set of vectors $\{\bar u\}$ satisfying $\ell_2^2$--triangle inequalities, and parameters
$m$ and $\beta$ generates an $\ell_2$--$\ell_2^2$ hypergraph $m$-orthogonal separator with probability scale $\alpha  \geq 1 / n$ 
and distortions:
\begin{align*}
D_{\ell_2^2} &= O(m), \\
D_{\ell_2}(r) &= O(\beta^{-1/2} \sqrt{\log r}\, m\log m \log\log m).
\end{align*}
Note that distortions $D_{\ell_2^2}$ and $D_{\ell_2}$ do not depend on $n$.
\end{theorem}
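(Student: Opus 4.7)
The plan is to follow the architecture of \prettyref{thm:h-ort-sep-distortion-bound} together with its supporting \prettyref{lem:omega} and \prettyref{lem:omega-amplified}, replacing only the one ingredient that is responsible for the $\sqrt{\log n}$ factor --- the use of the Arora--Lee--Naor theorem inside \prettyref{lem:omega} --- with a Gaussian projection. A Gaussian projection naturally produces bounds phrased in terms of $\ell_2$ distances of the normalized vectors $\varphi(\bar u)$, and the maximum of $\binom{|e|}{2}$ centred Gaussians controlled by the standard Gaussian maximum estimate costs only $\sqrt{\log |e|}$ instead of $\sqrt{\log n}$. This is exactly what is needed to make the distortions $n$-independent and to split the final bound into the two terms appearing in the definition of an $\ell_2$--$\ell_2^2$ separator.

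For the base procedure, I would sample a standard Gaussian vector $g\sim N(0,I)$ and a threshold $t$ uniform in $[-T,T]$ for some constant $T$, and set $\omega(u) = \mathbf{1}\bigl[\langle g,\varphi(\bar u)\rangle \geq t\bigr]$. For a pair $u,v$ with $\|\varphi(\bar u)-\varphi(\bar v)\|^2\geq \beta$, the gap $\langle g,\varphi(\bar u)-\varphi(\bar v)\rangle$ is a centred Gaussian of standard deviation at least $\sqrt{\beta}$, and a short Gaussian calculation gives $\Pr[\omega(u)\neq \omega(v)]\geq c_1\sqrt{\beta}$ for an absolute constant $c_1$. For a hyperedge $e$, the event ``$\omega$ not constant on $e$'' is contained in $\{t\in[\min_{u\in e}\langle g,\varphi(\bar u)\rangle,\,\max_{u\in e}\langle g,\varphi(\bar u)\rangle]\}$, so using $\mathbb{E}\max_{u,v\in e}|\langle g,\varphi(\bar u)-\varphi(\bar v)\rangle|\leq O(\sqrt{\log |e|})\max_{u,v\in e}\|\varphi(\bar u)-\varphi(\bar v)\|$ yields $\Pr[\omega\text{ not constant on }e]\leq c_2\sqrt{\log|e|}\,\max_{u,v\in e}\|\varphi(\bar u)-\varphi(\bar v)\|$. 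Amplification is then verbatim \prettyref{lem:omega-amplified}: XOR $K = \Theta(\log\log m/\sqrt{\beta})$ independent copies so that the separation probability rises to at least $\tfrac12 - 1/\log_2 m$, while the hyperedge cut probability of $\tilde\omega$ is bounded by a union bound over the $K$ copies as $O(\beta^{-1/2}\sqrt{\log|e|}\log\log m)\max_{u,v\in e}\|\varphi(\bar u)-\varphi(\bar v)\|$.

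The remainder of the construction --- sampling $l = \log_2 m + O(1)$ independent amplified assignments $\tilde\omega_1,\dots,\tilde\omega_l$, forming words $W(u)\in\{0,1\}^l$, drawing a random word $W$ and a random radius $r\in(0,1)$, and setting $S = \{u:\|\bar u\|^2\geq r,\ W(u)=W\}$ --- is identical to \prettyref{sec:orthogonal}, and properties~1 and~2 of an orthogonal separator carry over verbatim from \prettyref{thm:h-ort-sep-distortion-bound} since their proofs use only the separation probability of $\tilde\omega$ on a single pair and the independent structure of the word $W(u)$. For property~3 I would split the event ``$e$ is cut by $S$'' into $\mathcal{E}_1 = \{A\neq e,\ S\cap e\neq\varnothing\}$ and $\mathcal{E}_2 = \{A = e,\ \varnothing\subsetneq S\cap e\subsetneq e\}$ with $A = \{u\in e:\|\bar u\|^2\geq r\}$, exactly as before. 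The estimate $\Pr[\mathcal{E}_1]\leq \rho_M - \rho_m\leq \max_{u,v\in e}\|\bar u-\bar v\|^2$ is unchanged and, divided by $\alpha = \Omega(1/m)$, supplies the $D_{\ell_2^2}=O(m)$ term.

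For $\mathcal{E}_2$, combining the new hyperedge bound with a union bound over the $l$ independent $\tilde\omega_i$'s and the standard conversion $\|\varphi(\bar u)-\varphi(\bar v)\|\leq \sqrt{2}\,\|\bar u-\bar v\|/\sqrt{\max(\|\bar u\|^2,\|\bar v\|^2)}\leq \sqrt{2}\,\|\bar u-\bar v\|/\sqrt{\rho_m}$ gives
\[
\Pr[\mathcal{E}_2] \leq \rho_m\cdot O\!\left(\tfrac{l\sqrt{\log|e|}\log\log m}{\sqrt{\beta}}\right)\cdot\tfrac{\sqrt{2}\max_{u,v\in e}\|\bar u-\bar v\|}{\sqrt{\rho_m}} = O\!\left(\tfrac{\log m\log\log m \sqrt{\log|e|}}{\sqrt{\beta}}\right)\sqrt{\rho_m}\,\max_{u,v\in e}\|\bar u-\bar v\|,
\]
which is $\min_{w\in e}\|\bar w\|\cdot\max_{u,v\in e}\|\bar u-\bar v\|$ times the desired factor. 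Dividing by $\alpha = \Omega(1/m)$ yields $D_{\ell_2}(|e|) = O(\beta^{-1/2}\sqrt{\log|e|}\,m\log m\log\log m)$, as claimed. The main technical obstacle is calibrating the Gaussian base procedure so that the $c_1\sqrt{\beta}$ lower bound on pairwise separation and the $c_2\sqrt{\log|e|}$ upper bound on hyperedge cutting fit together with matching constants (in particular justifying that one can take $T$ to be a single absolute constant for both inequalities); once this Gaussian lemma is established, the amplification and word-construction steps are essentially copies of their counterparts in \prettyref{sec:orthogonal}.
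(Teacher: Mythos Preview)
Your proposal is correct and follows the same high-level architecture as the paper: replace the Arora--Lee--Naor step in the base assignment by a Gaussian projection, keep the amplification and the word/radius construction unchanged, and split the ``$e$ is cut'' event into $\mathcal{E}_1$ (contributing the $\ell_2^2$ term) and $\mathcal{E}_2$ (contributing the $\ell_2$ term). Properties~1, 2 and the $\mathcal{E}_1$ estimate are indeed verbatim, and your $\mathcal{E}_2$ computation is exactly how the paper proceeds.

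The one genuine difference is in the base assignment $\omega$. You threshold $\langle g,\varphi(\bar u)\rangle$ against a uniform $t\in[-T,T]$, which gives pairwise separation probability $\Omega(\sqrt{\beta})$ and per-edge cut probability $O(\sqrt{\log|e|})\max_{u,v\in e}\|\varphi(\bar u)-\varphi(\bar v)\|$; the $\beta^{-1/2}$ factor then enters through the number $K=\Theta(\beta^{-1/2}\log\log m)$ of XOR copies needed in the amplification. The paper instead composes the Gaussian projection with a Poisson process of rate $\beta^{-1/2}$ on the real line and reads off the parity of $N(\langle g,\varphi(\bar u)\rangle)$: this already gives a \emph{constant} pairwise separation probability ($\geq 0.3$) while pushing the $\beta^{-1/2}$ into the per-edge cut bound $O(\beta^{-1/2}\sqrt{\log|e|})\max\|\varphi(\bar u)-\varphi(\bar v)\|$. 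Consequently the paper can reuse \prettyref{lem:omega-amplified} literally (with $K=\Theta(\log\log m)$ independent of $\beta$), whereas you must redo that lemma with a $\beta$-dependent $K$. After amplification both routes give the identical bound $O(\beta^{-1/2}\sqrt{\log|e|}\log\log m)\max\|\varphi(\bar u)-\varphi(\bar v)\|$, so the final distortions match. Your version is arguably more elementary (no Poisson process), at the price of not being able to cite the amplification lemma off the shelf; the paper's Poisson trick buys exact reuse of the earlier amplification and a cleaner separation of where $\beta$ enters. The calibration worry you flag about choosing a single absolute $T$ is harmless: since each $\varphi(\bar u)$ is a unit vector, $\langle g,\varphi(\bar u)\rangle$ is standard Gaussian, so any fixed $T$ (e.g.\ $T=4$) makes both inequalities hold with absolute constants.
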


The algorithm and its analysis are very similar to those in the proof of \prettyref{thm:hyper-orth-sep}.  The only difference is that we use another procedure to generate random assignments $\omega:V\to \set{0,1}$.
The following lemma is an analog of \prettyref{lem:omega}.
\begin{lemma}\label{lem:omega-l2} 
There is a randomized polynomial time algorithm that given a finite set $V$, vectors $\varphi(\bar u)$ for $u\in V$, satisfying $\ell_2^2$ triangle inequalities, 
and a parameter $\beta\in(0,1)$, returns a random assignment 
$\omega:V \to \set{0,1}$  that satisfies the following properties.
\begin{itemize}
\item For every set $e\subset V$ of size at least 2, 
\[
\Prob{\omega(u)\neq \omega(v) \text{ for some } u,v\in e} \leq O(\beta^{-1/2}  \sqrt{\log |e|}) \times \max_{u,v\in e} \|\varphi(\bar u) - \varphi(\bar v)\|.
\]

\item  For every $u$ and $v$ such that $\| \varphi(\bar u) - \varphi(\bar v) \|^2 \geq \beta$,
\[
\Prob{\omega(u)\neq \omega(v)} \geq 0.3.
\]
\end{itemize}
\end{lemma}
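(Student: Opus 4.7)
The plan is to construct $\omega$ in two stages: a ``basic'' random assignment $\omega'$ via Gaussian projection with a uniform random threshold, followed by XOR amplification over $K = \Theta(\beta^{-1/2})$ independent copies. Concretely, I would sample a standard Gaussian vector $g \sim N(0, I_n)$ together with an independent threshold $t$ uniform on $[-a, a]$ for a sufficiently large absolute constant $a$, set $s_u = \langle g, \varphi(\bar u)\rangle$, and define $\omega'(u) = \indicator[s_u \geq t]$. Since every $\varphi(\bar u)$ is a unit vector, $\{s_u\}_{u \in V}$ is a centered Gaussian process with increment variance $\|\varphi(\bar u) - \varphi(\bar v)\|^2$.

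Next, $\omega'$ cuts a set $e$ precisely when $t \in [\min_{u \in e} s_u, \max_{u \in e} s_u]$; conditioning on $g$ and using that $t$ has density $1/(2a)$ on $[-a, a]$,
\[\Prob{\omega' \text{ cuts } e} \leq \frac{1}{2a}\, \E{\max_{u \in e} s_u - \min_{u \in e} s_u} \leq \frac{O(\sqrt{\log|e|})}{2a} \cdot \max_{u,v\in e} \|\varphi(\bar u) - \varphi(\bar v)\|\]
by the standard max-of-Gaussians bound applied to the centered process $\{s_u - s_{u_0}\}_{u \in e}$ for a fixed reference $u_0 \in e$. For a pair with $\|\varphi(\bar u) - \varphi(\bar v)\|^2 \geq \beta$, write $Z = (s_u+s_v)/2$ and $W = s_u - s_v$, which are independent Gaussians with $\Var(Z) \leq 1$ and $\Var(W) = \|\varphi(\bar u) - \varphi(\bar v)\|^2$. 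On the event $\set{|Z| + |W|/2 \leq a}$ the interval $[\min(s_u, s_v), \max(s_u, s_v)]$ lies inside $[-a, a]$, so $\Prob{\omega'(u) \neq \omega'(v) \given g} = |W|/(2a)$. For $a$ a sufficiently large absolute constant this event retains constant probability uniformly over $\|\varphi(\bar u) - \varphi(\bar v)\| \in [\sqrt{\beta}, 2]$, and $\E{|W| \cdot \indicator[\text{event}]} = \Omega(\sqrt{\beta})$, so $\Prob{\omega'(u) \neq \omega'(v)} = \Omega(\sqrt{\beta})$.

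Finally, I would set $\omega = \omega'_1 \oplus \cdots \oplus \omega'_K$ with $K = \lceil c_0 / \sqrt{\beta}\rceil$ independent copies of $\omega'$ and $c_0$ a sufficiently large absolute constant. If $\omega$ cuts $e$ then some $\omega'_i$ must also cut $e$ (otherwise every $\omega'_i$ is constant on $e$ and so is the XOR), so the union bound multiplies the cut bound by $K$, producing $O(\beta^{-1/2}\sqrt{\log|e|}) \cdot \max_{u,v\in e} \|\varphi(\bar u) - \varphi(\bar v)\|$. For a pair with basic separation probability $p = \Omega(\sqrt{\beta})$, the XOR identity gives $\Prob{\omega(u) \neq \omega(v)} = (1 - (1-2p)^K)/2 \geq (1 - e^{-2Kp})/2$, and choosing $c_0$ large enough forces $2Kp$ above a sufficient constant, yielding the $\geq 0.3$ lower bound.

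The main obstacle I anticipate is the clean pair lower bound on $\omega'$: one must pick $a$ large enough that the ``interval fits in $[-a, a]$'' event has constant probability uniformly over the allowed range $\|\varphi(\bar u) - \varphi(\bar v)\| \in [\sqrt{\beta}, 2]$, while still capturing an $\Omega(\sqrt{\beta})$ share of $\E{|W|}$. This reduces to a short case analysis on the magnitude of $\|\varphi(\bar u) - \varphi(\bar v)\|$ combined with a routine Gaussian tail estimate on $Z$, but it is the step where the constants that feed into $c_0$ (and hence into the final $\beta^{-1/2}$ factor) are actually produced.
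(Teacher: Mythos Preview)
Your proposal is correct but takes a somewhat different route from the paper. The paper achieves both properties in a single shot: it samples a Gaussian $g$, then a Poisson process $N$ on $\mathbb{R}$ with rate $1/\sqrt{\beta}$, and sets $\omega(u)$ to the parity of $N(\iprod{g}{\varphi(\bar u)})$. The cut bound follows because, conditioned on $g$, the probability that the Poisson process has a point between $\min_{w\in e}\iprod{g}{\varphi(\bar w)}$ and $\max_{w\in e}\iprod{g}{\varphi(\bar w)}$ is at most $(\tau_M-\tau_m)/\sqrt{\beta}$, and then the Gaussian-max bound gives $\E{\tau_M-\tau_m}=O(\sqrt{\log|e|})\diam(e)$. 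The separation bound comes from computing $\Prob{Z \text{ odd}}=(1-e^{-2\lambda})/2$ for a Poisson variable $Z$ with rate $\lambda=|\iprod{g}{\varphi(\bar u)-\varphi(\bar v)}|/\sqrt{\beta}$ and then averaging over $g$, using that the standard deviation of $\lambda$ is at least $1$.

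Your two-stage construction (uniform threshold on $[-a,a]$ followed by XOR of $K=\Theta(\beta^{-1/2})$ copies) is in fact a discrete cousin of the paper's construction: a Poisson process of rate $1/\sqrt{\beta}$ restricted to a bounded window places $\Theta(\beta^{-1/2})$ independent uniform points, and Poisson parity at $s_u$ is exactly the XOR of the threshold indicators at those points. Your version is more elementary (no Poisson process, just uniform thresholds and XOR) at the cost of an extra amplification step and some care with the window size $a$; the paper's version is cleaner and gets the constant separation directly. One small caveat: your $\Var(Z)\le 1$ and the independence of $Z,W$ both use $\|\varphi(\bar u)\|=\|\varphi(\bar v)\|=1$, which the lemma statement does not assert explicitly; this is harmless here because the map $\varphi$ sends every nonzero vector to a unit vector, but you should state the assumption.
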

\begin{proof}
We sample a random Gaussian vector $g\sim {\cal N}(0,I_n)$ (each component $g_i$
of $g$ is distributed as ${\cal N}(0,1)$, all random variables $g_i$ are mutually independent).
Let $N$ be a Poisson process on $\mathbb R$ with rate $1/\sqrt{\beta}$.
Let 
$w(u)=1$ if $N(\iprod{g}{u})$ is even, and $w(u)=0$ if $N(\iprod{g}{\varphi(\bar u)})$ is odd.
Note that $\omega(u) = \omega(v)$ if and only if $N(\iprod{g}{\varphi(\bar u)}) - N(\iprod{g}{\varphi(\bar v)})$ is even. 

Consider a set $e\subset V$ of size at least 2. 
Denote $\diam(e) = \max_{u,v\in e} \|\varphi(\bar u)-\varphi(\bar v)\|$.
Let $\tau_m = \min_{w\in e} 
\iprod{g}{\varphi(\bar w)}$ and $\tau_M = \max_{w\in e} 
\iprod{g}{\varphi(\bar w)}$. Note that
\begin{align*}
N(\tau_m) &= \min_{w\in e} N(\iprod{g}{\varphi(\bar w)}),\\
N(\tau_M) &= \max_{w\in e} N(\iprod{g}{\varphi(\bar w)}).
\end{align*}
If all numbers $N(\iprod{g}{\varphi(\bar u)})$ are equal then $\omega(u) = \omega(v)$ for all $u,v \in e$. Thus if  $\omega(u) \neq \omega(v)$ for some $u,v\in e$ then $N(\iprod{g}{\varphi(\bar u)}) \neq N(\iprod{g}{\varphi(\bar v)})$ for some $u,v\in e$. In particular, then $N(\tau_M) - N(\tau_m) > 0$. 
Given $g$, $N(\tau_M) - N(\tau_m)$ is a Poisson random variable with rate
$(\tau_M - \tau_m)/\sqrt{\beta}$. We have,
\begin{align*}
\Prob{\omega(u)\neq \omega(v) \text{ for some } u,v\in e \given g} &
\leq \Prob{N(\tau_M) - N(\tau_m) > 0 \given g} \\&
{}= 1 -e^{-(\tau_M - \tau_m)/\sqrt{\beta}} \leq \beta^{-1/2}(\tau_M - \tau_m).
\end{align*}
Let $\xi_{uv} = \iprod{g}{\varphi(\bar u)} - \iprod{g}{\varphi(\bar v)}$
for $u,v\in e$ ($u\neq v$).
Note that $\xi_{uv}$ are Gaussian random variables with mean 0, and
\[
\Var[\xi_{uv}] =  \Var[\iprod{g}{\varphi(\bar u)} - \iprod{g}{\varphi(\bar v)}]
= \|\varphi(\bar u) - \varphi(\bar v)\|^2 \leq \diam(e)^2
\]
Note that the expectation of the maximum of (not necessarily independent) $N$ Gaussian random variables with standard deviation bounded by $\sigma$ is
$O(\sqrt{\log N} \sigma)$.
We have,
\[
\E{\tau_M - \tau_m} = \E{\max_{u,v\in e} (\xi_{uv})} = O(\sqrt{\log |e|} \diam(e))
\]
since the total number of random variables $\xi_{uv}$ is $|e|(|e|-1)$.
Therefore,
\[
\Prob{\omega(u)\neq \omega(v) \text{ for some } u,v\in e}
\leq  \beta^{-1/2}\,\E{\tau_M - \tau_m} = O(\beta^{-1/2} \sqrt{\log |e|}\, \max_{u,v\in e} \|\varphi(\bar u)-\varphi(\bar v)\|).
\]
We proved that $\omega$ satisfies the first property.
Now we verify that $\omega$ satisfies the second condition. Consider two vertices $u$ and $v$ with  $\|\varphi(\bar u) - \varphi(\bar v) \|^2 \geq \beta$.
Given $g$, the random variable $Z = N(\iprod{g}{\varphi(\bar u)}) - N(\iprod{g}{\varphi(\bar v)})$ has Poisson distribution with rate $\lambda = |\iprod{g}{\varphi(\bar u)}) - \iprod{g}{\varphi(\bar v)}|/\sqrt{\beta}$. We have,
\[
\Prob{Z \text{ is even} \given g } = \sum_{k=0}^{\infty}
\Prob{Z = 2k \given g} = \sum_{k=0}^{\infty} \frac{e^{-\lambda}\lambda^{2k}}{(2k)!} = \frac{1 + e^{-2\lambda}}2.
\]
Note that $\lambda$ is the absolute value of a Gaussian random variable with mean 0 and standard deviation $\sigma = \|\varphi(\bar u) - \varphi(\bar v)\|/\sqrt{\beta} \geq 1$. Thus
\[
\Prob{Z \text{ is even}} = \E{\frac{1 + e^{-2\sigma |\gamma|}}2},
\]
where $\gamma$ is a standard Gaussian random variable, $\gamma \sim {\cal N}(0,1)$.
We have,
\[\Prob{\omega(u)\neq \omega(v)} = \E{\frac{1 - e^{-2\sigma |\gamma|}}2}
\geq \E{\frac{1 - e^{-2|\gamma|}}2} \geq 0.3.
\]
\end{proof}

Now we use the algorithm from \prettyref{thm:hyper-orth-sep}
to obtain $\ell_2$--$\ell_2^2$ hypergraph orthogonal separators. The only difference is that we use the procedure from \prettyref{lem:omega-l2} rather than from
\prettyref{lem:omega} to generate assignments $\omega$. We obtain a
$\ell_2$--$\ell_2^2$ hypergraph orthogonal separator.
\begin{theorem}
Random set $S$ is a hypergraph $m$-orthogonal separator with distortion
\begin{align*}
D_{\ell_2^2} &= O(m), \\
D_{\ell_2}(r) &= O(\beta^{-1/2} \sqrt{\log r}\, m\log m \log\log m),
\end{align*}
probability scale $\alpha \geq 1/n$ and separation threshold $\beta \in (0,1)$.
\end{theorem}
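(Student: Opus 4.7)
The plan is to reuse, nearly verbatim, the structure of the proof of \prettyref{thm:h-ort-sep-distortion-bound}, but with the Poisson-process assignments from \prettyref{lem:omega-l2} (after the same amplification as in \prettyref{lem:omega-amplified}) replacing the $\ell_2^2$-based assignments from \prettyref{lem:omega}. Since the algorithm generating $S$ differs from the one in \prettyref{sec:orthogonal} only in how each $\tilde\omega_i$ is sampled, Properties 1 and 2 of the orthogonal separator definition hold for exactly the same reasons as in \prettyref{thm:h-ort-sep-distortion-bound}: they depend only on the random word $W\in\{0,1\}^l$, the threshold $r\in(0,1)$, and the pairwise bound $\Prob{\tilde\omega_i(u)\neq\tilde\omega_i(v)}\geq \tfrac12 - \tfrac{1}{\log_2 m}$, which the amplified Poisson-process construction still satisfies. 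In particular $\alpha = \max(1/2^l,1/n) = \Omega(1/m)$ and $\alpha\geq 1/n$.

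All the work is in Property 3. As before I would split the event ``$e$ is cut by $S$'' into the two events ${\cal E}_1$ and ${\cal E}_2$, corresponding respectively to $r$ separating the squared norms $\|\bar w\|^2$ for $w\in e$, and to $r\le\rho_m := \min_{w\in e}\|\bar w\|^2$ together with $W(u)\ne W(v)$ for some $u,v\in e$. The bound $\Prob{{\cal E}_1}\le \max_{u,v\in e}\|\bar u-\bar v\|^2$ is identical to the earlier proof and contributes $\alpha D_{\ell_2^2}\max\|\bar u-\bar v\|^2$ with $D_{\ell_2^2}=1/\alpha=O(m)$.

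The new ingredient is the bound on $\Prob{{\cal E}_2}$. By the union bound over $i\in\{1,\dots,l\}$ and by the amplified version of \prettyref{lem:omega-l2}, the probability that $W(u)\ne W(v)$ for some $u,v\in e$ is at most
\[
O\bigl(l\,\beta^{-1/2}\sqrt{\log |e|}\,\log\log m\bigr)\cdot \max_{u,v\in e}\|\varphi(\bar u)-\varphi(\bar v)\|.
\]
Here the crucial difference with \prettyref{thm:h-ort-sep-distortion-bound} is that I now have $\|\varphi(\bar u)-\varphi(\bar v)\|$ rather than the square. Using property~4 of the normalization map, $\|\varphi(\bar u)-\varphi(\bar v)\|\le \sqrt{2}\,\|\bar u-\bar v\|/\sqrt{\max(\|\bar u\|^2,\|\bar v\|^2)}\le \sqrt{2}\,\|\bar u-\bar v\|/\sqrt{\rho_m}$. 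Multiplying by the probability $\rho_m$ that $r\le \rho_m$, the factor $1/\sqrt{\rho_m}$ collapses to $\sqrt{\rho_m} = \min_{w\in e}\|\bar w\|$, which is exactly the factor that appears in the $\ell_2$ term of the definition. This yields
\[
\Prob{{\cal E}_2}\le O\bigl(l\,\beta^{-1/2}\sqrt{\log|e|}\,\log\log m\bigr)\cdot \min_{w\in e}\|\bar w\|\cdot \max_{u,v\in e}\|\bar u-\bar v\|.
\]
Dividing by $\alpha=\Omega(1/m)$ and using $l=\log_2 m+O(1)$ gives $D_{\ell_2}(|e|)=O(\beta^{-1/2}\sqrt{\log |e|}\,m\log m\log\log m)$, as required.

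The main (minor) obstacle is the bookkeeping in this last step: the earlier proof used the $\ell_2^2$ property of $\varphi$, which produced a $1/\rho_m$ factor that exactly canceled with the probability $\rho_m$ that $r\le \rho_m$; here the $\ell_2$ property leaves a residual $\sqrt{\rho_m}$, and one must notice that this residual is precisely the $\min_{w\in e}\|\bar w\|$ prescribed by the $\ell_2$--$\ell_2^2$ definition, so that the bound fits the required form rather than forcing an $n$-dependent quantity. Once this observation is made, the two distortion bounds fall out exactly as stated, and the proof is complete.
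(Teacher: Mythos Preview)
Your proposal is correct and follows essentially the same approach as the paper: Properties 1 and 2 are verified exactly as in \prettyref{thm:h-ort-sep-distortion-bound}, the bound on ${\cal E}_1$ is unchanged and yields $D_{\ell_2^2}=1/\alpha=O(m)$, and for ${\cal E}_2$ you replace $\|\varphi(\bar u)-\varphi(\bar v)\|^2$ by $\|\varphi(\bar u)-\varphi(\bar v)\|$ via \prettyref{lem:omega-l2}, use property~4 of $\varphi$ to extract a $\rho_m^{-1/2}$ factor, and let the surviving $\rho_m^{1/2}=\min_{w\in e}\|\bar w\|$ supply the $\ell_2$ term. The paper's proof is identical in structure and detail.
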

\begin{proof}
The proof of the theorem is almost identical to that of \prettyref{thm:h-ort-sep-distortion-bound}. We first check conditions 1 and 2
of $\ell_2$--$\ell_2^2$ hypergraph orthogonal separators in the same way as we checked conditions 1 and 2 of hypergraph orthogonal separators
in \prettyref{thm:h-ort-sep-distortion-bound}.
When we verify that property 3 holds, we use bounds from \prettyref{lem:omega-l2}. The only difference is how we upper bound the
probability of the event ${\cal E}_2$.

If ${\cal E}_2$ happens then (1) $r \leq \rho_m $ (since $A = e$) and (2) $W(u) \neq W(v)$ for some $u,v\in e$.
The probability that $r \leq \rho_m$ is $\rho_m$. We upper bound the probability that  $W(u) \neq W(v)$ for some  $u,v\in e$.
For each $i\in\set{1,\dots,l}$,
\begin{align*}
\Prob{\tilde\omega_i(u)\neq \tilde\omega_i(v) \text{ for some } u,v\in e} &\leq
O(\beta^{-1/2} \sqrt{\log |e|}\,  \log\log m) \max_{u, v \in e}  \|\varphi(\bar u) - \varphi(\bar v)\| \\ 
&\leq  
O(\beta^{-1/2} \sqrt{\log |e|}\,  \log\log m) \max_{u, v \in e} \frac{\|\bar u - \bar v \|}{\min(\|\bar u\|, \|\bar v\|)}\\
&\leq  O(\beta^{-1/2} \sqrt{\log |e|}\,  \log\log m)
\times \rho_m^{-1/2} \times  \max_{u, v \in e} \|\bar u - \bar v \|.
\end{align*}

By the union bound over $i\in\set{1,\dots,l}$, the probability that $W(u) \neq W(v)$ for some $u,v\in e$ is at most 
$O(l \times \beta^{-1/2} \sqrt{\log |e|}\,  \log\log m)
\times \rho_m^{-1/2} \times  \max_{u, v \in e} \|\bar u - \bar v \|$.
Therefore,
\begin{align*}
\Prob{{\cal E}_2} &\leq  \rho_m \times O(l \times \beta^{-1/2} \sqrt{\log |e|}\,  \log\log m) \times\rho_m^{-1/2} \times  \max_{u, v \in e} \|\bar u - \bar v \|\\
&\leq 
O(l \times \beta^{-1/2} \sqrt{\log |e|}\,  \log\log m) \times\rho_m^{1/2} \times  \max_{u, v \in e} \|\bar u - \bar v \|.
\end{align*}
We get that the probability that $e$ is cut by $S$ is at most
\begin{align*}
\Prob{{\cal E}_1} + \Prob{{\cal E}_2} &\leq 
\max_{u, v \in e} \|\bar u - \bar v \|^2 + O(l \times \beta^{-1/2} \sqrt{\log |e|}\,  \log\log m) \times\rho_m^{1/2} \times  \max_{u, v \in e} \|\bar u - \bar v \|\\
&\leq 
\max_{u, v \in e} \|\bar u - \bar v \|^2 + O(l \times \beta^{-1/2} \sqrt{\log |e|}\,  \log\log m) \times\min_{w\in e} \|\bar w\| \times  \max_{u, v \in e} \|\bar u - \bar v \|.
\end{align*}
For 
$D_{\ell_2^2} = 1/\alpha$ and $D_{\ell_2}(r) = O(\beta^{-1/2} \sqrt{\log r}\,  \log m \log\log m)/\alpha$,
we get 
\[
\Prob{e \text{ is cut by } S} \leq \alpha D_{\ell_2^2} \cdot \max_{u, v \in e} \|\bar u - \bar v \|^2 + \alpha D_{\ell_2}(|e|) \cdot \min_{w\in e}\|\bar w\|\cdot \max_{u, v \in e} \|\bar u - \bar v \|.
\]
Note that $\alpha \geq 1/2^l \geq \Omega(1/m)$. Thus
\begin{align*}
D_{\ell_2^2} &= O(m), \\
D_{\ell_2}(r) &= O(\beta^{-1/2} \sqrt{\log r}\, m\log m \log\log m).
\end{align*}
\end{proof}

\section{Algorithm for Hypergraph Small Set Expansion via 
\texorpdfstring{$\ell_2$--$\ell_2^2$}{Euclidean} Hypergraph Orthogonal Separators}\label{sec:algoHSSE-2}
In this section, we present another algorithm for Hypergraph Small Set Expansion. The algorithm finds a set with expansion proportional
to ${\sqrt{\phi^*_{G,\delta}}}$. The proportionality constant depends on degrees of vertices and hyperedge size but not on 
the graph size. Here, we present our result for arbitrary hypergraphs. The result for uniform hypergraphs (\prettyref{thm:hsse-2}) stated in the introduction follows
from our general result. In order to state our result for arbitrary graphs, we need the following definition.

\begin{definition}\label{def:agdeg}
Consider a hypergraph $H = (V, E)$. Suppose that for every edge $e$ we are given a non-empty subset $e^{\circ} \subseteq e$.
Let 
\begin{align*}
\eta(u) &= \sum_{e: u\in e^{\circ}} \frac{\log_2 |e|}{|e^{\circ}|},\\
\eta_{max} &= \max_{u\in V} \eta(u).
\end{align*}
Finally, let $\agdeg$ be the minimum of $\eta_{max}$ over all possible choices of subsets $e^{\circ}$.
\end{definition}

\begin{claim}\label{claim:agdeg}
\begin{enumerate} 
\item $\agdeg \leq \max_{u\in V} \sum_{e:u\in e} (\log_2 |e|)/|e|$.
\item If $H$ is a $r$-uniform graph with maximum degree $d_{\text{max}}$ then $\agdeg \leq (d_{\text{max}} \log_2 r)/r$.
\item Suppose that we can choose one vertex in every edge so that no vertex is chosen more than once. Then $\agdeg \leq \log_2 r_{\text{max}}$,
where $r_\text{max}$ is the size of the largest hyperedge in $H$.
\end{enumerate} 
\end{claim}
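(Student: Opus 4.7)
The strategy for all three parts is the same: exhibit an explicit choice of the subsets $e^{\circ}$ that witnesses the desired upper bound on $\agdeg$, and then compute $\eta_{\max}$ for that choice. Since $\agdeg$ is defined as a minimum over all admissible choices, any specific choice yields an upper bound.

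For part 1, the plan is to take the trivial assignment $e^{\circ} = e$ for every hyperedge $e \in E$. Then $|e^{\circ}| = |e|$, and by the definition of $\eta(u)$ we directly obtain
\[
\eta(u) = \sum_{e:\, u\in e^{\circ}} \frac{\log_2|e|}{|e^{\circ}|} = \sum_{e:\, u\in e} \frac{\log_2 |e|}{|e|},
\]
so $\eta_{\max}$ for this choice equals the claimed right-hand side, and $\agdeg$ is bounded by this quantity.

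For part 2, I would just specialize part 1 to an $r$-uniform hypergraph of maximum degree $d_{\text{max}}$. Every edge satisfies $|e| = r$, and for each vertex $u$ there are at most $d_{\text{max}}$ edges containing it. Thus
\[
\sum_{e:\, u\in e} \frac{\log_2|e|}{|e|} \;\le\; d_{\text{max}} \cdot \frac{\log_2 r}{r},
\]
and the bound on $\agdeg$ follows.

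Part 3 requires a slightly different choice. Under the hypothesis, there is an injective map $\sigma: E \to V$ with $\sigma(e) \in e$ for every edge; set $e^{\circ} = \{\sigma(e)\}$, so $|e^{\circ}| = 1$. Because $\sigma$ is injective, each vertex $u$ lies in $e^{\circ}$ for at most one edge $e$ (namely $e = \sigma^{-1}(u)$, if it exists). Hence
\[
\eta(u) \;=\; \sum_{e:\, u\in e^{\circ}} \frac{\log_2|e|}{1} \;\le\; \log_2|\sigma^{-1}(u)| \;\le\; \log_2 r_{\max},
\]
with $\eta(u)=0$ for vertices not in the image of $\sigma$. Taking the maximum over $u$ gives $\eta_{\max} \le \log_2 r_{\max}$, and therefore $\agdeg \le \log_2 r_{\max}$. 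No real obstacle arises here; the only mild subtlety is noticing that the hypothesis of part 3 is precisely what is needed to make the one-vertex-per-edge assignment inject into $V$, which is what forces each $u$ to be counted at most once in $\eta(u)$.
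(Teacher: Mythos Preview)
Your proposal is correct and follows essentially the same approach as the paper: choose $e^{\circ}=e$ for part~1, specialize to the $r$-uniform case for part~2, and take $e^{\circ}$ to be the singleton of the chosen vertex for part~3. The only difference is cosmetic (you phrase part~3 via an injective map $\sigma$), and your inequality in part~2 is in fact slightly more careful than the paper's.
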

\begin{proof}\hspace*{0pt}
\begin{enumerate}
\item Let $e^{\circ} = e$ for every $e\in E$. We have, $\agdeg \leq \max_{u\in V} \sum_{e:u\in e} (\log_2 |e|)/|e|$.

\item By item 1, $\agdeg \leq \max_{u\in V} \sum_{e:u\in e} (\log_2 |e|)/|e| = \max_{u\in V} \sum_{e:u\in e} (\log_2 r)/r = (d_{\text{max}} \log_2 r )/r$.

\item For every edge $e\in E$, let $e^{\circ}$ be the set that contains the vertex chosen for $e$. Then $|e^{\circ}| = 1$ and $|\{e:u\in e^{\circ}\}| \leq 1$ for every $u$. We have,
\[
\agdeg \leq \max_{u\in V} \sum_{e: u\in e^{\circ}} \frac{\log_2 |e|}{|e^{\circ}|} \leq  \max_{u\in V} \sum_{e: u\in e^{\circ}} \frac{\log_2 r_{\text{max}}}{1}
= \log_2 r_{\text{max}}.
\]
\end{enumerate}
\end{proof}

\begin{theorem}\label{thm:hsse-3}
There is a randomized polynomial-time algorithm for the Hypergraph Small Set Expansion problem that
given a hypergraph $H = (V,E)$, and parameters $\varepsilon \in (0,1)$ and $\delta \in (0,1/2]$,
finds a set $S \subset V$ of size at most $(1+\varepsilon)\delta n$ such that 
\begin{align*}
\phi(S) &\leq O_{\varepsilon}\left(\delta^{-1}\log \delta^{-1}\log\log \delta^{-1}\,
\sqrt{\agdeg \cdot \phi^*_{H,\delta}} +
\delta^{-1} \phi^*_{H,\delta}\right) \\
&= \tilde O_{\varepsilon} \left(\delta^{-1} \left(\sqrt{\agdeg \phi^*_{H,\delta}} + \phi^*_{H,\delta}\right)\right),
\end{align*}
In particular, if $H$ is an $r$-uniform hypergraph with maximum degree $d_{\text{max}}\,$,
then we have,
\[
\phi(S) \leq \tilde O_{\varepsilon} \left(\delta^{-1} \left(\sqrt{ 
d_{\text{max}} \frac{\log_2 r}{r}
\phi^*_{H,\delta}} + \phi^*_{H,\delta}\right)\right).
\]
\end{theorem}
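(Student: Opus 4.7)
\textbf{Proof plan for Theorem \ref{thm:hsse-3}.}
I would follow the same template as the proof of \prettyref{thm:HSSE-proof}, but plug in the $\ell_2$--$\ell_2^2$ hypergraph orthogonal separator from \prettyref{thm:hyper-orth-sep-l2} in place of the ordinary one. First solve the SDP relaxation of \prettyref{fig:SDP} to obtain vectors $\set{\bar u}$ with SDP value $\sdpcost \leq \phi^*_{H,\delta}$, sample an $\ell_2$--$\ell_2^2$ separator $S$ with parameters $m = 4/(\varepsilon\delta)$ and $\beta = \varepsilon/4$, and truncate it: $S'=S$ if $|S|\leq (1+\varepsilon)\delta n$, and $S'=\varnothing$ otherwise. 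Since the new separator satisfies the same properties~1 and~2 as the ordinary one, the size argument of \citet{BFK} yields $\Prob{u\in S'}\geq \tfrac{\alpha}{2}\|\bar u\|^2$, hence $\E{|S'|}\geq \alpha/2$.

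The heart of the argument is bounding $\E{|\Ecut{S'}|}$. Property~3 of the $\ell_2$--$\ell_2^2$ separator gives
\[
\E{|\Ecut{S'}|} \leq \alpha\cdot O(m)\cdot\sdpcost + \alpha\cdot O(\beta^{-1/2} m\log m\log\log m)\sum_{e\in E}\sqrt{\log |e|}\cdot\min_{w\in e}\|\bar w\|\cdot\max_{u,v\in e}\|\bar u-\bar v\|.
\]
By Cauchy--Schwarz the sum on the right is at most $\sqrt{\sum_e \log|e|\cdot\min_{w\in e}\|\bar w\|^2}\cdot\sqrt{\sdpcost}$. The critical combinatorial step is the bound $\sum_e \log|e|\cdot\min_{w\in e}\|\bar w\|^2\leq \agdeg$: for any choice of nonempty subsets $e^{\circ}\subseteq e$, $\min_{w\in e}\|\bar w\|^2\leq |e^{\circ}|^{-1}\sum_{u\in e^{\circ}}\|\bar u\|^2$, so swapping the order of summation turns the sum into $\sum_u \|\bar u\|^2\,\eta(u)\leq \eta_{\max}\cdot\sum_u\|\bar u\|^2 = \eta_{\max}$, using the SDP normalization $\sum_u\|\bar u\|^2=1$. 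Minimizing over the choice of $e^{\circ}$ yields $\agdeg$.

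Next I would apply the Markov trick from the proof of \prettyref{thm:HSSE-proof}: letting $B$ denote the above upper bound on $\E{|\Ecut{S'}|}$, define $Z = |S'| - |\Ecut{S'}|\cdot\alpha/(4B)$, so that $\E{Z}\geq \alpha/4$ and $Z\leq n$ deterministically; Markov's inequality gives $\Prob{Z>0}\geq \alpha/(4n)$, and resampling $O(n/\alpha)$ times produces with high probability a set $S'$ of size at most $(1+\varepsilon)\delta n$ satisfying $|\Ecut{S'}|/|S'|\leq 4B/\alpha$. Plugging in $m=4/(\varepsilon\delta)$, $\beta=\varepsilon/4$, and using $\sdpcost\leq\phi^*_{H,\delta}$ gives
\[
\phi(S')\leq O_\varepsilon(\delta^{-1})\phi^*_{H,\delta} + \tilde O_\varepsilon(\delta^{-1})\sqrt{\agdeg\cdot\phi^*_{H,\delta}},
\]
matching the claimed bound; the $r$-uniform specialization then follows from item~2 of \prettyref{claim:agdeg}.

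The step I expect to be the main obstacle is the combinatorial bound $\sum_e\log|e|\min_{w\in e}\|\bar w\|^2\leq \agdeg$, since this is where the definition of $\agdeg$ must be exploited in exactly the right way, pairing the $\min$-norm factor in each edge with a carefully chosen subset $e^{\circ}$ that amortizes the $\log|e|$ load across only a few vertices. Once this pairing (together with the Cauchy--Schwarz step that separates the $\sdpcost$ factor) is in place, the rest of the argument is just a direct translation of the framework of \prettyref{thm:HSSE-proof}.
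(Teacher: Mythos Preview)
Your proposal is correct and follows essentially the same route as the paper: solve the SDP, sample an $\ell_2$--$\ell_2^2$ separator with $m=4/(\varepsilon\delta)$ and $\beta=\varepsilon/4$, truncate, bound the expected cut cost, and finish with the Markov argument from \prettyref{thm:HSSE-proof}. The only cosmetic difference is the order of operations in the key estimate: the paper first replaces $\min_{w\in e}\|\bar w\|$ by the average over $e^{\circ}$ and then applies Cauchy--Schwarz over the double sum $\sum_{e}\sum_{w\in e^{\circ}}$, whereas you apply Cauchy--Schwarz over $e$ first and only afterwards bound $\min_{w\in e}\|\bar w\|^2$ by the $e^{\circ}$-average; both orderings produce the same bound $\sum_e \log|e|\min_{w\in e}\|\bar w\|^2\leq \eta_{\max}$.
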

\begin{proof}
The proof is similar to that of \prettyref{thm:HSSE-proof}.
We solve the SDP relaxation for H-SSE and obtain an SDP solution $\set{\bar u}$. Denote the SDP value by $\sdpcost$.
Consider an $\ell_2$--$\ell_2^2$ hypergraph orthogonal separator $S$
with $m = 4/(\varepsilon \delta)$ and $\beta = \varepsilon/4$.
Define a set $S'$: 
\[
S' = 
\begin{cases}
S,\ \text{if } |S| \leq (1+\varepsilon) \delta n,\\
\varnothing,\ \text{otherwise.}
\end{cases}
\]
Clearly, $|S'| \leq (1+\varepsilon) \delta n$.
As in the proof of \prettyref{thm:HSSE-proof},
$\Pr(u \in S') \in \bigl[\frac{\alpha}{2} \, \|\bar u\|^2,\alpha 
\|\bar u\|^2\bigr]  $. 
Note that 
\[\Prob{S' \text{ cuts edge } e} \leq \Prob{S \text{ cuts edge } e}
\leq \alpha D_{\ell_2} \max_{u,v\in e} \|\bar u - \bar v\|^2 + \alpha D_{\ell_2}(r) \min_{w\in e} \|\bar w\|\max_{u,v\in e} \|\bar u - \bar v\|.\]
Let ${\cal C} = \alpha^{-1}\E{|\Ecut{S'}|} $. 
Let $Z = |S'| - \frac{|\Ecut{S'}|}{4{\cal C}}$.
We have,
\[ \E{Z} = \E{|S'|} - \E{\frac{|\Ecut{S'}|}{4{\cal C}}}
\geq \sum_{u \in V} \frac{\alpha}{2} \cdot \|\bar u\|^2
= \frac{\alpha}{2} - \frac{\alpha}{4}= \frac{\alpha}{4}.\]
Now we upper bound $\cal C$. Consider the optimal choice of $e^{\circ}$ for $H$ in the definition of $\agdeg$. 
\begin{align*}
{\cal C}=\alpha^{-1} \E{|\Ecut{S'}|}  
&\leq
\alpha^{-1} \sum_{e\in E} \Prob{e \text{ is cut by } S}\\
&\leq
D_{\ell_2^2} \sum_{e\in E} \max \|\bar u
-\bar v\|^2 + \sum_{e\in E} D_{\ell_2}(|e|) \min_{w\in e} \|\bar w\| \max_{u,v\in e} \|\bar u-\bar v\|\\
& \leq D_{\ell_2^2} \cdot \sdpcost 
 + \sum_{e\in E} D_{\ell_2}(|e|) \, \sum_{w\in e^{\circ}} \left(\frac{\|\bar w\| }{|e^{\circ}|}\right) \times \max_{u,v\in e} \|\bar u-\bar v\| \\
& \leq D_{\ell_2^2} \cdot \sdpcost 
 + \sum_{e\in E} \sum_{w\in e^{\circ}} \frac{D_{\ell_2}(|e|) \|\bar w\| }{\sqrt{|e^{\circ}|}} \times \frac{\max_{u,v\in e} \|\bar u-\bar v\|}{\sqrt{|e^{\circ}|}} \\
\text{\footnotesize (by Cauchy---Schwarz) }
& \leq D_{\ell_2^2} \cdot \sdpcost 
 + \sqrt{\sum_{e\in E} \sum_{w\in e^{\circ}} \frac{D_{\ell_2}(|e|)^2 \|\bar w\|^2 }{|e^{\circ}|}}
   \sqrt{\sum_{e\in E} \sum_{w\in e^{\circ}}  \frac{\max_{u,v\in e} \|\bar u-\bar v\|^2}{|e^{\circ}|} }\\
&\leq D_{\ell_2^2} \cdot \sdpcost 
 + \sqrt{\sum_{w\in V} \sum_{e: w\in e^{\circ}} \frac{D_{\ell_2}(|e|)^2  }{|e^{\circ}|} \|\bar w\|^2}\,
\sqrt{\sdpcost} .
\end{align*}
For every vertex $w$,
\[
\sum_{e: w\in e^{\circ}} \frac{D_{\ell_2}(|e|)^2  }{|e^{\circ}|} \leq
O_{\beta}(m\log m\log\log m)^2 {\sum_{e: w\in e^{\circ}} \frac{\log_2 |e|}{|e^{\circ}|}} \leq O_{\beta}(m\log m\log\log m)^2 \times {\agdeg}.
\]
and $\sum_{w\in V} \|\bar w\|^2 = 1$.
Therefore,
\[
{\cal C} \leq O_{\beta}\left(m \sdpcost + m\log m\log\log m\,
\sqrt{\agdeg \cdot \sdpcost}\right). 
\]
By the argument from \prettyref{thm:HSSE-proof},
we get that if we sample $S'$ sufficiently many times (i.e., ($4n^2/\alpha$) times),
we will find a set $S'$ such that
\[
\frac{|\Ecut{S'}|}{|S'|} \leq 4{\cal C}\leq 
O_{\beta}\left(\delta^{-1}\log \delta^{-1}\log\log \delta^{-1}\,
\sqrt{\agdeg \cdot \sdpcost} +
\delta^{-1} \sdpcost\right)
\]
with probability  exponentially close to 1.
\end{proof}
\section{Reduction from Vertex Expansion to Hypergraph Expansion}
\label{sec:vertex-to-hyper}

In the reduction from vertex expansion to hypergraph expansion, we will use the notion of 
{\em Symmetric Vertex Expansion}.
For a graph $G = (V,E)$, and for a set $S \subset V$, we define its internal neighborhood $\Nin(S)$,  
and its outer neighborhood $\Nout(S)$ as follows.
\[ \Nin(S) =  \{ u \in S : \exists\, v \in \bar{S} \textrm{ such that } \{u,v \} \in E \} \] 
\[ \Nout(S) = \{ u \in \bar{S} : \exists\, v \in S \textrm{ such that } \{u,v \} \in E \}  .  \]
The symmetric vertex expansion of a set, denoted by
$\Phi^V(S)$, is defined as
\begin{align*}
 \Phi^V(S) &=  \frac{ | \Nin(S)  \cup \Nout(S) | }{ \min (|S|,|\bar S|)  } ,\\
 \Phi^V_{G,\delta} &= \min_{\substack{S\subset V \\
 0<|S|\leq \delta n}} \Phi^V(S). 
\end{align*}
We will use the following reduction from vertex expansion to symmetric vertex expansion. 

\begin{theorem}[\cite{lrv13}]
\label{thm:sym-vert}
Given a graph $G$, there exists a graph $G'$  such that 
\[  c_1 \phi^V_{G,\delta} \leq    \Phi^V_{G',\delta}  \leq c_2  \phi^V_{G,\delta} . \]
where $c_1,c_2 > 0$ are absolute constants, and 
the maximum degree of graph $G'$ is equal to the maximum degree of graph $G$.
Moreover, there exists a polynomial time algorithm to compute such graph $G'$.
\end{theorem}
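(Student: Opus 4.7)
The plan is to give an explicit gadget construction of $G'$ from $G$, then verify the two inequalities by translating cuts in each direction. The key conceptual point is that the gadget should force $|\Nin(T)|$ and $|\Nout(T)|$ to be comparable for any low-expansion cut $T$, so that ordinary vertex expansion in $G$ and symmetric vertex expansion in $G'$ differ only by constants.

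For the construction I would take $V' = V \times \{0,1\}$ and, for each edge $\{u,v\} \in E$, place the two crossing edges $\{(u,0),(v,1)\}$ and $\{(u,1),(v,0)\}$ in $E'$ (the tensor/bipartite-double-cover construction). Every vertex $(v,i)$ has degree exactly $\deg_G(v)$, so the max-degree claim is immediate. The graph $G'$ is bipartite with parts $V\times\{0\}$ and $V\times\{1\}$, and this bipartiteness is what will balance $\Nin$ and $\Nout$.

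For the upper bound $\Phi^V_{G',\delta} \le c_2\,\phi^V_{G,\delta}$, given $S\subseteq V$ with $|S|\le\delta n$ achieving the optimum on $G$, I would lift it to $T = S\times\{0,1\}\subset V'$. Then $|T|=2|S|\le\delta|V'|$. A short case analysis shows $\Nout_{G'}(T) = \Nout_G(S)\times\{0,1\}$ and $\Nin_{G'}(T) = \Nin_G(S)\times\{0,1\}$. Combined with the standard bound $|\Nin_G(S)|\le|E(S,\bar S)|$ and the observation that in $G'$ each edge crossing the cut witnesses one vertex in $\Nin$ and one in $\Nout$, this gives $\Phi^V_{G'}(T) \le O(1)\cdot\phi^V_G(S)$.

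For the harder direction $c_1\,\phi^V_{G,\delta} \le \Phi^V_{G',\delta}$, suppose $T\subseteq V'$ has $|T|\le\delta|V'|$ and small $\Phi^V_{G'}(T)$. Write $T_i = \{v\in V : (v,i)\in T\}$ for $i\in\{0,1\}$. I would define the projected set in $G$ by the thresholding rule $S = \{v\in V : (v,0)\in T \text{ and } (v,1)\in T\}$ (or take the symmetric difference, whichever is smaller) and argue via a randomized/averaging argument over $i\in\{0,1\}$ that one of the candidate projections has $|S| \le (1+o(1))\delta n$ and $|\Nout_G(S)| \le O(1)\cdot(|\Nin_{G'}(T)| + |\Nout_{G'}(T)|)$. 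The bipartite structure of $G'$ is what makes this work: any boundary vertex on the $V\times\{0\}$ side is matched, through the gadget, to behavior on the $V\times\{1\}$ side, so the two contributions $\Nin_{G'}(T),\Nout_{G'}(T)$ together control the one-sided boundary in $G$.

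The main obstacle I anticipate is the projection direction: one needs to rule out pathological $T$ that are highly unbalanced between the two layers (e.g., $T$ essentially lives in $V\times\{0\}$). In that case the naive thresholding can lose too much in $|S|$ or pick up spurious boundary vertices. The standard fix is to run two projections in parallel (``$\land$'' and ``$\lor$'' across the two layers) and charge the excess to $|\Nin_{G'}(T)\cup\Nout_{G'}(T)|$, exploiting that every vertex $v$ with $(v,0)\in T$ but $(v,1)\notin T$ (or vice versa) contributes to the symmetric boundary of $T$ via the crossing edges. A careful counting shows that at least one of the two projections satisfies both the size and the expansion bounds with constant loss, completing the reduction.
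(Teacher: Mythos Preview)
The paper does not prove this theorem at all: it is quoted as a black box from \cite{lrv13} and then used in the proof of \prettyref{thm:hyper-vert-exp}. So there is no ``paper's own proof'' to compare against.

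That said, your proposed construction is incorrect. The bipartite double cover $G' = G \times K_2$ (with crossing edges only) is disconnected whenever $G$ is bipartite: it splits into two disjoint copies of $G$. Taking $T$ to be one of these copies (or any subset of it of the right size) gives $\Phi^V_{G'}(T)=0$, while $\phi^V_{G,\delta}$ is positive for, say, a star. This kills the lower bound $c_1\,\phi^V_{G,\delta}\le \Phi^V_{G',\delta}$ outright.

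Your upper-bound argument is also broken, independently of the disconnection issue. For the lifted set $T=S\times\{0,1\}$ you correctly compute
\[
\Phi^V_{G'}(T)=\frac{|\Nin_G(S)|+|\Nout_G(S)|}{|S|},
\]
but then you need $|\Nin_G(S)|=O(|\Nout_G(S)|)$ to conclude $\Phi^V_{G'}(T)\le O(1)\cdot\phi^V_G(S)$. That is false in general: if $S$ is the set of leaves of a star, $|\Nout_G(S)|=1$ while $|\Nin_G(S)|=|S|$. The bound $|\Nin_G(S)|\le|E(S,\bar S)|$ you invoke does not help, because $|E(S,\bar S)|$ is not controlled by $|\Nout_G(S)|$. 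So the gadget does not ``force $|\Nin|$ and $|\Nout|$ to be comparable'' in the way your plan requires; the bipartite structure only exchanges the roles of $\Nin$ and $\Nout$ between the two layers, it does not equalize their sizes. A correct reduction (as in \cite{lrv13}) uses a different gadget.
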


\begin{theorem}[Restatement of \prettyref{thm:hyper-vert-exp}]
There exist absolute constants $c_1', c_2' \in \mathbb{R}^+$ such that for 
 every  graph $G=(V,E)$, 
there exists a polynomial time computable hypergraph 
$H = (V',E')$ 
such that
\[ c_1' \phi^*_{H,\delta} \leq \phi^V_{G,\delta} \leq c_2' \phi^*_{H,\delta}, \]
and $\agdeg \leq \log_2 d_{\text{max}}$.
\end{theorem}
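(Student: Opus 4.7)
The plan is to compose two reductions: first the reduction from vertex expansion to symmetric vertex expansion from~\cite{lrv13}, and then a direct reduction from symmetric vertex expansion to hypergraph expansion. The second reduction is the heart of the argument and preserves expansion exactly.

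First, I apply \prettyref{thm:sym-vert} to obtain a graph $G' = (V',E')$ of the same maximum degree as $G$ such that $\Phi^V_{G',\delta}$ is within constant factors of $\phi^V_{G,\delta}$. The problem is thus reduced to approximating $\Phi^V_{G',\delta}$ by hypergraph expansion.

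Next, I construct the hypergraph $H = (V', E'')$ on the same vertex set $V'$ whose hyperedges are the closed neighborhoods in $G'$: for each $u \in V'$, let $e_u = \{u\} \cup N_{G'}(u)$, and let $E'' = \{e_u : u \in V'\}$. The key observation is that for any set $S \subset V'$, the hyperedge $e_u$ is cut by $S$ if and only if $u \in \Nin(S) \cup \Nout(S)$. Indeed, $e_u$ intersects both $S$ and $\bar S$ iff either $u \in S$ and some neighbor of $u$ lies in $\bar S$ (so $u \in \Nin(S)$), or $u \in \bar S$ and some neighbor lies in $S$ (so $u \in \Nout(S)$). Therefore
\[
|\Ecut{S}| \;=\; |\Nin(S) \cup \Nout(S)|,
\]
which gives $\phi_H(S) = \Phi^V_{G'}(S)$ for every $S$. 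Minimizing over sets of size at most $\delta n$ yields $\phi^*_{H,\delta} = \Phi^V_{G',\delta}$, and composing with \prettyref{thm:sym-vert} provides the required two-sided bound between $\phi^*_{H,\delta}$ and $\phi^V_{G,\delta}$.

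Finally, to control $\agdeg$, I use the optimal choice $e_u^{\circ} = \{u\}$ in \prettyref{def:agdeg}. Then $|e_u^{\circ}| = 1$ for each hyperedge, and every vertex $w \in V'$ belongs to exactly one set $e_u^{\circ}$, namely the one with $u = w$. Hence
\[
\eta(w) \;=\; \sum_{e : w \in e^{\circ}} \frac{\log_2 |e|}{|e^{\circ}|} \;=\; \log_2 |e_w| \;\leq\; \log_2(d_{\text{max}}+1),
\]
since $|e_w| = 1 + \deg_{G'}(w)$ and $G'$ has maximum degree at most $d_{\text{max}}$. Taking a maximum over $w$ bounds $\agdeg$ as required. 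The construction of $H$ from $G'$ is clearly polynomial time, and the main (and essentially only) subtle step is verifying the exact identification $|\Ecut{S}| = |\Nin(S) \cup \Nout(S)|$; everything else is a transparent consequence of the definitions and of the previously known vertex-to-symmetric-vertex reduction.
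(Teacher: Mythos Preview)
Your proof is correct and follows essentially the same approach as the paper: reduce to symmetric vertex expansion via \cite{lrv13}, build the hypergraph of closed neighborhoods in $G'$, identify $\phi_H(S)$ with $\Phi^V_{G'}(S)$, and bound $\agdeg$ by choosing $e_u^{\circ}=\{u\}$. Your equivalence ``$e_u$ is cut by $S$ iff $u\in \Nin(S)\cup\Nout(S)$'' is a slightly crisper packaging of the paper's two-inequality argument, but the content is the same; note also that both you and the paper actually obtain $\agdeg\leq \log_2(d_{\text{max}}+1)$, matching the original statement of \prettyref{thm:hyper-vert-exp} rather than the $\log_2 d_{\text{max}}$ appearing in the restatement.
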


\begin{proof}
Starting with graph $G$,
we use \prettyref{thm:sym-vert} to obtain a graph $G' = (V',E')$ such that 
\begin{equation}
 c_1 \phi^V_{G,\delta} \leq    \Phi^V_{G',\delta}  \leq c_2  \phi^V_{G,\delta} . 
\label{eq:sym-expansion}
\end{equation}

Next we construct hypergraph $H = (V',E'')$ as follows. For every vertex $v \in V'$, we add the
hyperedge $\{v \}  \cup \Nout(\{v\})$ to $E''$ (note that $\Nout(\{v\})$ is the set of neighbors of $v$ in $G$).
Fix an arbitrary set $S \subset V$.

We first show that $\Phi^V(S) \leq  \phi_H(S) $. 
Consider the vertices $\Nin(S)$. Each vertex in $v \in \Nin(S)$ 
has a neighbor, say $u$, in $\bar{S}$. Therefore  
the hyperedge $\{v\} \cup \Nout(\{v\})$ is cut by $S$ in $H$. Similarly, 
for each vertex $v \in \Nout(S)$, the hyperedge $\{v\} \cup \Nout(\{v\})$ is cut by 
$S$ in $H$.
All these hyperedges are disjoint by construction. Therefore,
\[ \Phi^V(S) = \frac{ \Abs{\Nin(S)} + \Abs{\Nout(S)} }{ \Abs{S}  } \leq 
  \frac{ \Abs{E_{cut}(S)}   }{ \Abs{S}} \leq  \phi_H(S) .
\]

Now we verify that $\phi_H(S) \leq \Phi^V(S) $.
For any hyperedge $( \{v\} \cup \Nout(\{v\}) ) \in E_{cut}(S)$, 
the vertex $v$ has to belong to either $\Nin(S)$ or $\Nout(S)$. Therefore,
\[ \phi_H(S) \leq \frac{ \Abs{E_{cut}(S)} }{ \Abs{S}  } \leq   
	 \frac{ \Abs{\Nin(S)} + \Abs{\Nout(S)} }{ \Abs{S}  } = \Phi^V(S) . \]
Therefore, we get that 
\begin{align*} \phi_H(S) &= \Phi^V(S) \qquad \makebox[0pt][l]{for every $S \subset V$,}   \\
\intertext{and hence}
\phi_{H,\delta}^* &= \Phi_{G',\delta}^V
\end{align*}
We get from \prettyref{eq:sym-expansion},
\[ c_1 \phi^V_{G,\delta} \leq \phi^*_{H,\delta}   \leq  c_2\phi^V_{G,\delta}. \]

Finally, we upper bound $\agdeg$. We use part 3 of \prettyref{claim:agdeg}. We choose vertex $v$ in the hyperedge $\set{v} \cup \Nout(\set{v})$. 
By \prettyref{claim:agdeg}, $\agdeg \leq \log_2 r_{\text{max}}$, where  $r_{\text{max}}$ is the size of the largest hyperedge.
Note that $|\set{v} \cup \Nout(\set{v})| = \deg v + 1$. Thus  $\agdeg \leq \log_2 r_{\text{max}} \leq \log_2 (d_{\text{max}} + 1)$
\end{proof}

\section{SDP Intgrality Gap} \label{sec:SDPgap}
In this section, we present an integrality  gap for the SDP relaxation for H-SSE. We also give  a lower bound on the distortion of a hypergraph $m$-orthogonal separator.

\begin{theorem}\label{thm:gap}
For $\delta = 1/r$, the integrality gap of the SDP for H-SSE is at least $1/(2\delta) = r/2$.
\end{theorem}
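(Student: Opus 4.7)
The plan is to construct a very simple instance for which the integer optimum can be read off directly, and then hand-build an SDP solution that achieves the claimed gap. I would take $H=(V,E)$ with $|V|=n=r$ vertices and a single hyperedge $e=V$ of size $r$. Since $\delta n = (1/r)\cdot r = 1$, every feasible combinatorial $S$ with $0<|S|\leq \delta n$ is a singleton; each singleton cuts the lone hyperedge (it contains one vertex of $e$ and omits the other $r-1$), so $\phi^*_{H,\delta}=1/1=1$.

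For the SDP side, my candidate solution is the orthonormal embedding $\bar u = e_u/\sqrt{n}$, where $\{e_u : u \in V\}$ is an orthonormal basis of $\mathbb{R}^n$. I would then verify each SDP constraint in turn. The normalization $\sum_u \|\bar u\|^2 = n \cdot (1/n) = 1$ is immediate; the inner-product bounds $0 \leq \langle \bar u,\bar v\rangle \leq \|\bar u\|^2$ are clear because inner products are either $0$ (distinct pairs) or $1/n$ (coincident); and the $\ell_2^2$-triangle inequalities \prettyref{eq:triangle-ineq1} are trivial because every nontrivial pairwise squared distance equals $2/n$. The only constraint that really matters is the sum constraint $\sum_{v\in V}\langle \bar u,\bar v\rangle \leq \delta n \cdot \|\bar u\|^2$: here the left side equals $\langle \bar u,\bar u\rangle = 1/n$ and the right side equals $1\cdot(1/n)=1/n$, so it is satisfied with equality.

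The SDP objective then evaluates to $\sum_{e\in E}\max_{u,v\in e}\|\bar u-\bar v\|^2 = 2/n = 2\delta$, yielding an integrality gap of at least $\phi^*_{H,\delta}/\sdpcost = 1/(2\delta) = r/2$. The conceptual point is that the sum constraint is the one that normally forces the SDP to ``spread'' its vectors, and uniformly scaled orthonormal vectors of common length $1/\sqrt{n}$ saturate it exactly when the whole vertex set forms a single hyperedge. There is no substantive obstacle to this proof; the entire argument is a short linear-algebra verification, with the only subtlety being to notice that the sum constraint is tight rather than slack under this choice of embedding.
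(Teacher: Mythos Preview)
Your proposal is correct and matches the paper's proof essentially verbatim: the paper uses the same single-hyperedge instance on $n=r$ vertices and the same scaled orthonormal SDP solution $\bar u$ with $\|\bar u\|^2 = 1/r$, obtaining SDP value $2/r$ against $\phi^*_{H,\delta}=1$. You give more detail on constraint verification than the paper does, but the argument is identical.
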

\begin{proof}
Consider a hypergraph $H= (V,E)$ on $n=r$ vertices with one hyperedge $e = V$ ($e$ contains all vertices). Note that the expansion of every set of size $\delta n = 1$ 
is $1$. Thus $\phi_{H,\delta}^* = 1$. 

Consider an SDP solution that assigns vertices mutually orthogonal vectors of length $1/\sqrt{r}$. It is easy to see this is a feasible SDP solution.
Its value is $\max_{u,v\in e} \|\bar u - \bar v\|^2 = 2/r$. Therefore, the SDP integrality gap is at least $r/2$.
\end{proof}

Now we give a lower bound on the distortion of hypergraph $m$-orthogonal separators.
\begin{lemma}
For every $m > 4$, there is an SDP solution such that every hypergraph $m$-orthogonal separator with 
separation threshold $\beta \geq 0$
has distortion at least $\lceil m\rceil/4$.
\end{lemma}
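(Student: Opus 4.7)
The plan is to exhibit an SDP solution and use inclusion-exclusion to convert the integrality gap from \prettyref{thm:gap} into a distortion lower bound. Concretely, I would take the instance used in the proof of \prettyref{thm:gap} with $r = \lceil m\rceil$: a hypergraph on $n=r$ vertices consisting of a single hyperedge $e=V$, with the SDP solution assigning mutually orthogonal vectors $\bar u$ of squared length $\|\bar u\|^2 = 1/r$. Feasibility was verified in \prettyref{thm:gap}. The key observation is that for every pair $u\neq v$ in $e$ we have $\|\bar u-\bar v\|^2 = 2/r = 2\min(\|\bar u\|^2,\|\bar v\|^2)$, so the large-angle hypothesis in property~2 of \prettyref{def:h-othogonal-sep} holds for every $\beta\in[0,2]$, and in particular for every $\beta\geq 0$ considered in the lemma.

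Next I apply the three properties of a hypothetical hypergraph $m$-orthogonal separator $S$ with distortion $D$. Property~1 gives $\Pr(u\in S)=\alpha/r$, and property~2 gives $\Pr(u,v\in S)\le \alpha/(rm)$ for every $u\neq v$ in $e$. By Bonferroni,
\[
\Pr(S\cap e\neq \varnothing) \;\ge\; \sum_{u\in e}\Pr(u\in S) - \sum_{u<v}\Pr(u,v\in S) \;\ge\; \alpha - \binom{r}{2}\cdot\frac{\alpha}{rm} \;=\; \alpha\Bigl(1-\frac{r-1}{2m}\Bigr),
\]
which is at least $\alpha/2$ since $r=\lceil m\rceil\le m+1$. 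Since the event $S\supseteq e$ forces any two fixed vertices to lie in $S$ simultaneously, property~2 yields $\Pr(S\supseteq e)\le \alpha/(rm)$. Subtracting,
\[
\Pr(e \text{ is cut by } S) \;\ge\; \frac{\alpha}{2} - \frac{\alpha}{rm}.
\]

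To finish, I combine this lower bound with property~3, which asserts $\Pr(e \text{ is cut by } S)\le \alpha D \cdot \max_{u,v\in e}\|\bar u-\bar v\|^2 = 2\alpha D/r$. Rearranging gives $D \ge r/4 - 1/(2m)$, which for $m>4$ matches $\lceil m\rceil/4$ up to a lower-order correction; a small adjustment, such as enlarging $r$ to $\lceil m\rceil+O(1)$ or tightening the Bonferroni estimate (e.g.\ by a Kounias-type inequality), recovers exactly $\lceil m\rceil/4$. The only obstacle I foresee is keeping this lower-order correction under control, which is bookkeeping rather than mathematics: the conceptual content is that the small co-occurrence guarantee of property~2 cuts $\Pr(S\supseteq e)$ down to $O(\alpha/(rm))$, while the marginal probabilities force $\Pr(S\cap e\neq \varnothing)=\Omega(\alpha)$, so the hyperedge $e$, whose SDP contribution is only $2/r$, is nonetheless cut with probability $\Omega(\alpha)$, forcing $D$ to grow linearly in $r\sim m$.
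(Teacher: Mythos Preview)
Your proposal is correct and follows essentially the same route as the paper: the same SDP instance (mutually orthogonal vectors of squared length $1/r$ on $r=\lceil m\rceil$ vertices forming a single hyperedge $e=V$), the same Bonferroni lower bound $\Pr(S\cap e\neq\varnothing)\ge\alpha/2$, and the same comparison with property~3 yielding $D\ge r/4$ up to lower-order terms. You are in fact slightly more careful than the paper: the paper states its Bonferroni bound as a lower bound on $\Pr(|S|=1)$ and then invokes $\{|S|=1\}\subseteq\{e\text{ is cut}\}$, but the inequality written there is really a lower bound on $\Pr(|S|\ge 1)$, so the explicit subtraction of $\Pr(S\supseteq e)\le\alpha/(rm)$ that you perform is precisely what is needed to make the step rigorous, and the resulting $1/(2m)$ correction is, as you say, bookkeeping.
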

\begin{proof}
Consider the SDP solution from \prettyref{thm:gap} for $n=r=\lceil m\rceil$. Consider a hypergraph $m$-orthogonal separator $S$ for this solution.
Let $D$ be its distortion. Note that condition (2) from the definition of hypergraph orthogonal separators applies
to any pair of distinct vertices $(u,v)$ since $\iprod{\bar u}{\bar v} = 0$.

By the inclusion--exclusion principle, we have,
\begin{align*}
\Prob{|S| = 1} &\geq \sum_{u\in S} \Prob{u\in S} - \frac12 \sum_{u,v\in S, u\neq v} \Prob{u\in S,v\in S}\\
&\geq \sum_{u\in S} \alpha \|\bar u\|^2  - \frac12 \sum_{u,v\in S, u\neq v} \frac{\alpha \min(\|\bar u\|^2,\|\bar v\|^2)}{m}\\
&= \alpha - \frac{\alpha n(n-1)}{2mr} = \alpha \left(1 - \frac{(n-1)}{2m}\right) \geq \alpha /2.
\end{align*}
On the other hand, if $|S| = 1$ then $S$ cuts $e$. We have, 
\[
\Prob{|S| = 1} \leq \Prob{S \text{ cuts } e} \leq \alpha D \max_{u,v\in e} \|\bar u - \bar v\|^2 = 2\alpha D/r.
\]
We get that $\alpha /2 \leq 2\alpha D/r$ and thus $D \geq r/4 = \lceil m\rceil/4$.
\end{proof}

\section{Proof of \prettyref{lem:omega-amplified}}
\label{sec:proof-omega-amplified}
Let $K = \max\left(\left\lceil \frac{\log_2 \log_2 m}{-\log_2(1-4p)}\right\rceil,1\right)$. We independently sample $K$ assignments $\omega_1,\dots,
\omega_K$. Let 
\[
\tilde\omega(u) = \omega_1(u) \oplus \dots \oplus \omega_K(u),
\]
where $\oplus$ denotes addition modulo 2.
Consider $u$ and $v$ such that $\| \varphi(\bar u) - \varphi(\bar v) \|^2 \geq \beta$. Let 
$$\tilde p = \Prob{\omega_i (u) \neq \omega_i(v)} \geq 2p \quad \text{for}\quad i\in\set{1,\dots, K}$$
(the expression does not depend on the value of $i$ since all $\omega_i$
are identically distributed). Note that $\tilde\omega(u) \neq \tilde\omega(v)$
if and only if $\omega_i (u) \neq \omega_i(v)$ for an odd number of values $i$. Therefore,
\begin{align*}
\Prob{\omega(u) \neq \omega(v)} &= 
\sum_{0\leq k\leq K/2} \binom{K}{2k+1} \tilde p^{2k+1} (1-\tilde p)^{K-2k-1}=
\frac{1 - (1-2\tilde p)^K}{2}\\
&\geq \frac{1 - (1-4 p)^K}{2}
\geq \frac{1}{2} - \frac{1}{\log_2 m}.
\end{align*}
Now let $e\subset V$ be a subset of size at least 2. We have,
\[
\Prob{\tilde \omega (u) \neq \tilde \omega (v)} 
\leq \Prob{\omega_i (u) \neq \omega_i (v) \text{ for some } i}
\leq O(K\beta^{-1}  \sqrt{\log n}\max_{u,v\in e} \|\varphi(\bar u) - \varphi(\bar v)\|^2).
\]

\end{document}